\newcommand{\RE}{\mathbb{R}}
\newcommand{\eps}{\varepsilon}
\newcommand{\bd}{\partial}
\newcommand{\LFS}{\phi}
\newcommand{\capsule}{\mathcal{C}}
\newcommand{\inv}[1]{\frac{1}{#1}}
\newcommand{\Hess}{\nabla^2}
\newcommand{\Transpose}{\intercal}
\newcommand{\SP}{\kern+1pt}
\DeclareMathOperator{\diam}{diam}
\DeclareMathOperator{\dist}{dist}
\DeclareMathOperator{\vol}{vol}
\DeclareMathOperator{\Cyl}{Cyl}
\newcommand{\mytitle}{Approximate Nearest-Neighbor Search for Line Segments}
\title{\mytitle}
\titlerunning{\mytitle}%optional, please use if title is longer than one line
\author{Ahmed Abdelkader}{Oden Institute for Computational Engineering and Sciences \\ The University of Texas at Austin, Austin, USA}{akader@cs.umd.edu}{https://orcid.org/0000-0002-6749-1807}{}%mandatory, please use full name; only 1 author per \author macro; first two parameters are mandatory, other parameters can be empty.
\author{David M. Mount}{Department of Computer Science and Institute of Advanced Computer Studies \\ University of Maryland, College Park MD, USA}{mount@umd.edu}{https://orcid.org/0000-0002-3290-8932}{}
\authorrunning{A.\ Abdelkader and D.\ M.\ Mount.}%mandatory. First: Use abbreviated first/middle names. Second (only in severe cases): Use first author plus 'et. al.'
\keywords{Approximate nearest-neighbor searching, Approximate Voronoi diagrams, Line segments, Macbeath regions.}%mandatory
\begin{document}

\maketitle

%-----------------------------------------------------------------------
\begin{abstract}
Approximate nearest-neighbor search is a fundamental algorithmic problem that continues to inspire study due its essential role in numerous contexts. In contrast to most prior work, which has focused on point sets, we consider nearest-neighbor queries against a set of line segments in $\RE^d$, for constant dimension $d$. Given a set $S$ of $n$ disjoint line segments in $\RE^d$ and an error parameter $\varepsilon > 0$, the objective is to build a data structure such that for any query point $q$, it is possible to return a line segment whose Euclidean distance from $q$ is at most $(1+\varepsilon)$ times the distance from $q$ to its nearest line segment. We present a data structure for this problem with storage $O((n^2/\varepsilon^{d}) \log (\Delta/\varepsilon))$ and query time $O(\log (\max(n,\Delta)/\varepsilon))$, where $\Delta$ is the spread of the set of segments $S$. Our approach is based on a covering of space by anisotropic elements, which align themselves according to the orientations of nearby segments.
\end{abstract}
%-----------------------------------------------------------------------

%=======================================================================
\section{Introduction} \label{sec:intro}
%=======================================================================

Proximity queries are essential building blocks in many important algorithms with numerous applications \cite{FPS96, CoH67, DuH73, CoS93, GeG91, FSN95, DDF90, DeW82}. A primary example is \emph{nearest-neighbor searching}, where a given set of $n$ points $P$ in $\RE^d$ is preprocessed into a data structure so that queries can be answered efficiently. As the complexity bounds for such query problems grow very rapidly as the dimension increases, either in terms of query time or space, most research has focused on approximate solutions. There has been a great deal of work on approximate proximity searching in spaces of very high dimension \cite{HIM12, Ind04, CLM08, ANNRW18a} and in general metric spaces \cite{KrL04, KrL05, HaM06, ANNRW18a}. Nonetheless, there are many important applications that naturally reside in real spaces of relatively low dimensions.

In this paper, we consider approximate nearest-neighbor searching for a query point against a discrete set of line segments in $\RE^d$, where $d$ is a fixed constant. We are given a set $S$ of $n$ disjoint line segments in $\RE^d$. The distance from any point $q \in \RE^d$ to a segment $s$, denoted $\dist(q,s)$, is the minimum Euclidean distance between $q$ and any point of $s$. For $\eps > 0$, a segment $s' \in S$ is an \emph{$\eps$-approximate nearest neighbor} (\emph{$\eps$-ANN}) of $q$ if $\dist(q,s')$ is within a factor of $1+\eps$ of the distance to $q$'s closest segment in $S$. Given $S$ and $\eps > 0$, the objective is to construct a data structure so that given any $q \in \RE^d$, it is possible to compute an $\eps$-ANN of $q$ efficiently. We refer to this problem as \emph{segment-ANN}.

Clearly, nearest-neighbor searching with respect to segments is at least as hard as for point sets, and there are quadratic worst-case lower bounds in the exact and approximate settings \cite{Har01, Aro02}. The difficulty of the problem can be appreciated by considering the increased complexity of the Voronoi diagram of a set of lines or line segments, which is not fully understood to date~\cite{aurenhammer_et_al:LIPIcs:2017:8215, barequet_et_al:LIPIcs:2019:11558}. Recall that such Voronoi diagrams consist of cells bounded by hyperplanes and algebraic surfaces of constant degree, and hence are generally nonconvex; see~\cite{SuI94,karavelas2004robust} for the computation of Voronoi diagrams of line segments in $\RE^2$. A set of just three straight lines in $\RE^3$ suffices to induce a highly intricate Voronoi diagram~\cite{Everett2009}. Better bounds are known in restricted scenarios, for example, by bounding the number of orientations~\cite{KoS03,emiris2010approximate} or working with a polyhedral distance function~\cite{Koltun2004,CKS98}.

Our work follows in this tradition with the main motivation of developing a better understanding of proximity searching among more complex objects than discrete sets of points. We are particularly interested in distance functions whose rate of change is much larger in some directions compared to others. This sort of behavior is characterized by the notion of \emph{anisotropy}, which can be defined for smooth convex functions as the ratio of the largest to the smallest eigenvalues of the Hessian matrix at the point in question. Line segments are perhaps the simplest objects inducing such distance functions. This type of proximity searching against linear and affine subspaces has recently been applied to problems in pattern recognition~\cite{BHZ11,WAW13} and active learning~\cite{VJG14}. A notion of \textit{direction-sensitive distances} has been studied in the plane~\cite{aichholzer1999skew}; see also the work on \textit{boat-sail distances}~\cite{Sug11}.

In this paper we present a new data structure for segment-ANN. Our data structure is an AVD-style data structure \cite{Har01, ArM02, AVD-JACM, AFM17a}. By this we mean that it employs a hierarchical subdivision of space (a covering in our case) by elements of constant complexity (ellipsoids in our case). At the leaf level of the hierarchy, each element stores a \emph{representative segment} of $S$ that is an $\eps$-ANN for any query point lying within the element. Queries are answered by a simple descent through the hierarchy, reporting the representative of the leaf-level element. Up to now, AVD structures have relied on quadtree-based subdivisions. A novel feature of our approach is that the elements are anisotropic, where their shapes are sensitive to the local distribution of segments. The advantages of such an approach are illustrated intuitively in Figure~\ref{quadtree-simple.fig}. Our approach is inspired by recent progress on the use of anisotropic covering elements based on Macbeath regions \cite{Mac50, Bar00} used in convex approximation \cite{AFM17b, AFM17c, AFM18a, AFM18b, AbM18}.

%-----------------------------------------------------------------------
\begin{figure}[htbp]
  \centerline{\includegraphics[scale=0.40]{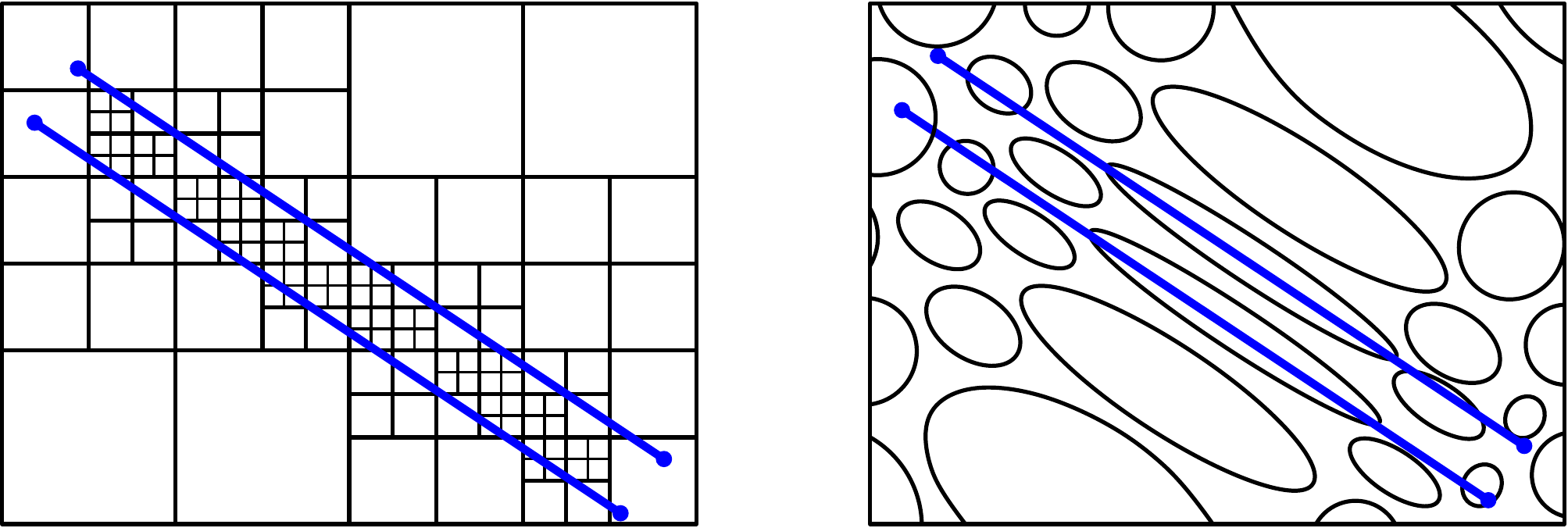}}
  \caption{Approximation using isotropic (quadtree) elements compared to anisotropic elements.}
  \label{quadtree-simple.fig}
\end{figure}
%-----------------------------------------------------------------------

Our input consists of a set $S$ of $n$ pairwise disjoint line segments in $\RE^d$. Define the \emph{spread}, denoted $\Delta(S)$ to be $\diam(S)/\delta_{\min}(S)$, where $\diam(S)$ is the diameter of the set $S$ (the maximum distance between any two points lying on these segments), and $\delta_{\min}$ is the minimum distance between any two segments. Since $S$ will be fixed throughout, we will just refer to this as $\Delta$. Here is our main result.

%-----------------------------------------------------------------------
\begin{theorem} \label{ann-main.thm}
Given a set $S$ of $n$ disjoint line segments in $\RE^d$ of spread $\Delta$ and $\eps > 0$, there exists a data structure that can answer $\eps$-ANN queries in time $O(\log (\max(n,\Delta)/\eps))$ using $O((n^2/\eps^d) \log \frac{\Delta}{\eps})$ storage.
\end{theorem}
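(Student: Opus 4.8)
The plan is to build an AVD-style hierarchical covering of the relevant region of space by anisotropic ellipsoidal cells, where each leaf cell carries a representative segment that is an $\eps$-ANN for every query point in the cell. Let me sketch the construction and the accounting.

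First I would restrict attention to the region $\Omega$ within which queries can have a nontrivial answer: outside a ball of radius $O(\diam(S))$ around $S$, a single representative suffices (the segment is approximately as far as any other up to $1+\eps$), and extremely close to a segment $s$ — within distance $O(\eps \cdot \delta_{\min})$ — the segment $s$ itself is trivially the $\eps$-ANN. So the interesting "shell" around each segment spans distance scales from roughly $\eps\,\delta_{\min}$ up to $\diam(S)$, a range of $O(\log(\Delta/\eps))$ scales. At each distance scale $r$ from a given segment $s$, I would cover the locus of points at distance $\approx r$ from $s$ by anisotropic elements: these are "capsule-like" / ellipsoidal regions elongated parallel to $s$. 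The key geometric fact driving the anisotropy is that the distance-to-$s$ function changes slowly along directions parallel to $s$ (as long as the foot of the perpendicular stays on $s$) and at unit rate in the perpendicular directions; hence an ellipsoid of length $\Theta(\max(r, \mathrm{len}(s)))$ along $s$ and width $\Theta(\eps r)$ in the $d-1$ perpendicular directions has the property that $\dist(\cdot, s)$ varies by only a $(1+O(\eps))$ factor across it. I would choose constants so that a single segment $s$ contributes $O((1/\eps)^{d-1})$ such ellipsoids per scale near its endpoints (where the distance function transitions to spherical) and $O(1/\eps)^{d-2}$ along its "slab" portion, for a total of $O((1/\eps)^{d-1}\log(\Delta/\eps))$ ellipsoids over all scales. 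Summing over the $n$ segments gives $O((n/\eps^{d-1})\log(\Delta/\eps))$ generated ellipsoids — but this alone does not yet give a valid search structure, because a query point's true nearest segment need not be the one whose shell it was generated from.

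The heart of the argument, and the source of the extra factor of $n/\eps$ in the storage bound, is the \emph{conflict resolution} step: an ellipsoid $E$ generated for segment $s$ at scale $r$ is only a valid leaf cell if no \emph{other} segment $s'$ comes significantly closer than $r$ to points of $E$. When some $s'$ does intrude, I would refine $E$ — clip it against the (approximate) bisector of $s$ and $s'$, which within $E$ looks like a bounded-degree algebraic surface but, crucially, can be sandwiched between two parallel hyperplanes at the scale and accuracy we care about — and recurse on the pieces, reassigning representatives. Each refinement against one interfering segment multiplies the cell count by $O(1/\eps)$ (the number of anisotropic sub-ellipsoids needed to cover $E$ at the finer resolution dictated by the bisector), and a packing/charging argument shows that at any fixed scale and location only $O(n)$ segments can be "relevant" in total across the whole structure (each pair $(s,s')$ interacts over $O(\log(\Delta/\eps))$ scales). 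This yields the claimed $O((n^2/\eps^d)\log(\Delta/\eps))$ bound on the number of leaf cells, and hence on storage, once the cells are organized into a balanced hierarchical search structure (e.g.\ by a BBD-tree-style or compressed-quadtree-style decomposition over the ellipsoid centers, whose depth is $O(\log(\max(n,\Delta)/\eps))$).

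For the query algorithm, given $q$ I would descend the hierarchy to locate a leaf ellipsoid containing $q$ in $O(\log(\max(n,\Delta)/\eps))$ time and return its stored representative; correctness follows because, by construction, every leaf cell was certified to have a segment within a $(1+\eps)$ factor of the true nearest-neighbor distance for all its points (the three regimes — trivial far region, trivial near region, and the conflict-resolved shell cells — cover all of $\RE^d$). Preprocessing amounts to generating the ellipsoids, running the conflict resolution, and building the search hierarchy, all within time proportional to the output size up to logarithmic factors. The main obstacle I anticipate is making the conflict-resolution and charging argument airtight: one must show both that clipping against interfering bisectors terminates after controlling the number of relevant segments (so the recursion does not blow up beyond $n^2/\eps^d$), and that the clipped anisotropic pieces remain "fat" enough — i.e.\ well-approximated by ellipsoids of controlled aspect ratio — to be handled by the same machinery, which is exactly where the Macbeath-region technology for convex bodies, adapted here to the sublevel sets of $\dist(\cdot,s)$, does the work.
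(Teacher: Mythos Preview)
Your proposal takes a genuinely different route from the paper, and the central step---\emph{conflict resolution} by clipping per-segment ellipsoids against approximate bisectors---is where it breaks down. The paper never generates cells segment-by-segment and never clips against bisectors. Instead, for each point $x$ and scale $r$ it defines the capsule $\capsule(x,r)=\bigcap_{s\in S}\capsule_s(x,r)$ as the intersection over \emph{all} segments simultaneously; this intersection is already a convex, centrally symmetric body whose John ellipsoid serves directly as the covering element. Interference from other segments is thus baked into the shape from the outset, and no refinement against bisectors is needed. Termination of the hierarchy is governed by the \emph{local feature size} $\LFS(x)$ (distance to the second-nearest segment): once the scale $r_i$ drops below $\LFS(x)$ the node becomes a ``basic leaf,'' and a further $O(\log(1/\eps))$ levels of pure scaling produce the final leaves. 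The storage bound then comes from a charging argument (Lemma~\ref{charging.lem}): each basic-leaf capsule is charged to the ordered pair $(s_1,s_2)$ consisting of its nearest segment and the segment whose cylinder first cuts the axis parallel to $s_1$, and a volume comparison within cylindrical shells shows that only $O(\log(\Delta/\eps))$ capsules are charged to any pair. This gives $O(n^2\log(\Delta/\eps))$ basic leaves, each spawning $O(1/\eps^d)$ final leaves.

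Your accounting, by contrast, does not close. You assert that ``each refinement against one interfering segment multiplies the cell count by $O(1/\eps)$'' and that ``only $O(n)$ segments can be relevant in total across the whole structure,'' but neither claim is justified, and taken literally they do not combine to give $O((n^2/\eps^d)\log(\Delta/\eps))$: if each of $\Theta(n/\eps^{d-1})$ starting cells were refined against even a constant number of interferers at cost $1/\eps$ each, the exponents already do not match, and if a single cell can see many interferers the blow-up is uncontrolled. You correctly flag this as the ``main obstacle,'' but it is not a detail---it is the whole argument. The paper's global-intersection capsule plus the $\LFS$ stopping rule is precisely the device that replaces your recursive clipping with a single clean packing bound, and without an analogue of it your scheme has no termination or counting argument. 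The search hierarchy is also different: the paper builds a DAG directly on the capsule ellipsoids (each level is a maximal packing at scale $r_i$, with parent--child links between overlapping ellipsoids at consecutive levels, and $O(1)$ children per node by expansion--containment), rather than a BBD-tree over centers as you suggest.
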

%-----------------------------------------------------------------------

Note that because the line segments are disjoint and the dimension is constant, $n = O(\Delta^d)$, the query time bound can be simplified to $O(\log \frac{\Delta}{\eps})$. 

The most closely related works to ours are segment-ANN data structures by Mahabadi~\cite{Mah15} and Agarwal, Rubin, and Sharir \cite{ARS17}. Mahabadi's solution is based on reducing segment-ANN to point-ANN through a combination of reductions. These reductions produce $n^{O(1)}$ point-ANN modules, where each module involves $O(n/\eps^{O(1)})$ points. The space bounds obtained are inferior to ours in terms of $n$ and $\eps$, and while individual modules can be solved within the AVD model, the overall data structure is not in this model. Agarwal, Rubin, and Sharir\cite{ARS17} consider the more general problem of ANN queries against $k$-flats in $\RE^d$. As in our case, $d$ is assumed to be constant. They solve the problem by approximating the Euclidean ball with a polyhedron distance function of complexity $1/\eps^{O(1)}$ \cite{Boissonnat1998}, and they show that it is possible to compute nearest neighbors exactly among $k$-flats with respect to the induced polyhedral distance function through the use of multi-level partition trees. In the case of line segments, their approach provides polylogarithmic query time with $n^2 (\log (n)/\eps)^{O(1)}$ storage, but the approach makes critical use of the fact that the objects are (infinite) flats. As with Mahabadi's result, there is no dependence on the spread. There are also works that consider the dual problem, where the data set consists of points and the query is a $k$-flat~\cite{AIKN09,MNSS15,ARS17}.

Our data structure has a number of notable features. First, it is in the AVD model (which partially answers an open problem posed by Agarwal, Rubin, and Sharir \cite{ARS17}. The query algorithm is almost trivial, involving a descent through a rooted directed acyclic graph (DAG) of constant degree. The decision of which neighbor to visit next is just a membership test for an ellipsoid. By abandoning the quadtree-based approaches used in prior AVD solutions, we demonstrate how to exploit the anisotropic nature of the nearest-neighbor distance function to obtain a space-efficient hierarchical spatial decomposition. 

The remainder of the paper is organized as follows. Section~\ref{sec:anisotropy} formalizes the notion of anisotropy by examining the differential properties of the distance to the segments. Section~\ref{sec:capsules} introduces the notion of a \emph{capsule}, the basic shape upon which our data structure is built, and introduces the relevant properties of these objects. Section~\ref{data-struct.sec} presents our ANN search structure, and Section~\ref{sec:storage} analyzes its storage complexity. For completeness, in Appendix~\ref{lower-bound.sec} we include a proof of the quadratic lower bound in our anisotropic generalization of the AVD.

%=======================================================================
\section{Exposing Anisotropy} \label{sec:anisotropy}
%=======================================================================

In this section, we formally characterize how the distance function associated by a set of line segments naturally induces a Riemannian metric whose metric tensor is anisotropic; see Figure~\ref{ellipsoids.fig}. This characterization underpins the design of our data structure which draw inspiration from classical constructions in convex optimization. For the sake of efficiency, the construction of our data structure will be based on a simpler approach, and this section may be skimmed without hampering the understanding of the material that follows.

\begin{figure}[htbp]
\centering
\begin{tikzpicture}[
dot/.style = {circle, fill, minimum size=#1,
              inner sep=0pt, outer sep=0pt},
dot/.default = 6pt
                    ]
    
    \draw [line width=0.25mm, blue ] (0,0) -- (10, 0);
    \draw [white,inner color=orange,outer color=orange!5] (5,1) ellipse ({0.95*sqrt(26)} and 0.95*1);
    \draw [white,inner color=red,outer color=red!5] (7,-0.5) ellipse ({0.95*sqrt(9.25)} and 0.95*0.5);
    \draw [white,inner color=green,outer color=green!5] ({10+sqrt(0.875)},0.25) ellipse (0.95*1 and 0.95*1);

    \node[dot=5pt, label=below:{\large $a$}] at (0,0) {};
    \node[dot=5pt, label=below:{\large $b$}] at (10,0) {};
    \node[dot=5pt, label=below right:{\large $x_1$}] at (5,1) {};
    \node[dot=5pt, label=below right:{\large $x_2$}] at (7,-0.5) {};
    \node[dot=5pt, label=below right:{\large $x_3$}] at ({10+sqrt(0.875)},0.25) {};
\end{tikzpicture}
  \caption{Demonstrating the local tensors induced by a segment $\overline{ab}$ as defined in Equation~\ref{eq:local_ellipsoid}.}
  \label{ellipsoids.fig}
\end{figure}
%-----------------------------------------------------------------------

Given a set of pairwise-disjoint segments $S = \{s_1, \dots, s_n\}$, denote by $\ell_i$ the line supporting $s_i = \overline{a_i b_i}$ parallel to the unit vector $v_i$. We define the distance functions at any $x \in \RE^d$ as
\begin{equation}\label{eq:distance_to_segment}
    D_i(x) ~ = ~ \begin{cases}
           D^\ell_i(x), & \text{if } x^\perp \in \text{int}(s_i), \\
           D^\bullet_i(x), & \text{otherwise},
          \end{cases}
\end{equation}
where $D^\ell_i$ is half the squared distance to the line $\ell_i$, and $D^\bullet_i(x) = \min\{D^{a_i}(x), D^{b_i}(x)\}$ with $D^{a_i}$ and $D^{b_i}$ being half the squared distance to the endpoints $a_i$ and $b_i$, respectively, $x^\perp$ the projection of $x$ onto $\ell_i$, and $\text{int}(s_i)$ the interior of $s_i$. As is common for similar definitions, we work with squared distances and introduce the $\frac{1}{2}$ factor to simplify the resulting derivatives.

For every $x \in \RE^d$, we seek a definition of a \emph{local tensor} to effectively consolidate the two cases in the definition of $D_i(x)$ per Equation~\ref{eq:distance_to_segment}. Using such local tensors, we can define a \emph{local descriptor}, e.g., an ellipsoid, whose shape describes the rate of change of the distance function $D_i$ in the neighborhood of $x$. We achieve this by first examining the Hessian of the distance functions defining $D_i(x)$ for each segment in isolation. Then, we consider the consolidation of all distance functions as needed for nearest-neighbor searching.

%=======================================================================
\subsection{Distance Hessians} \label{sec:derivation}
%=======================================================================

For a fixed point $p \in \RE^d$, the associated distance takes the form
\begin{equation} \label{eq:point_hess}
    D^p(x) 
        ~ = ~ \frac{1}{2}\|x - p\|^2 
        ~ = ~ \frac{1}{2} \sum_{i=1}^d (x_i - p_i)^2, \quad\text{for which $\Hess D^p = I$,}
\end{equation}
where $\Hess$ denotes the function's Hessian and $I$ is the identity matrix. For a fixed line $\ell = \{p + t v \mid t \in \RE\}$, with $p,v \in \RE^d$ and $\|v\|=1$, the distance takes the form
\begin{align*}
    D^\ell(x) 
          ~ = ~ \frac{1}{2}\|x - x^\perp\|^2
        & ~ = ~ \frac{1}{2}\sum_{i=1}^d \left((x_i-p_i) - \langle x-p, v \rangle v_i\right)^2,
\end{align*}
where $x^\perp$ is the projection of $x$ onto $\ell$. We proceed to compute the Hessian $\Hess D^\ell$ as follows.
\[
    \frac{\partial D^\ell}{\partial x_k} 
        ~ = ~ (x_k - p_k) - \langle x - p, v \rangle v_k, \qquad
    \frac{\partial^2 D^\ell}{\partial x_k^2} 
        ~ = ~ 1-v_k^2, \qquad
    \frac{\partial^2 D^\ell}{\partial x_k\partial x_n}
        ~ = ~ -v_k v_n,
\]
\begin{equation} \label{eq:line_hess}
    \Hess D^\ell 
        ~ = ~ I - v v^\Transpose.
\end{equation}
It is easy to verify that $v$ is an eigenvector of $\Hess D^\ell$ with eigenvalue $0$. Letting $T$ be any rotation matrix such that $T v = [1, 0, \dots, 0]^\Transpose$, we obtain
% \begin{equation}\label{eq:canonical_hessian} (Currently not using the label
\[
    \Hess (D^\ell \circ T) 
        ~ = ~ \begin{bmatrix}
            0 & 0 & \dots & 0 \\
            0 & 1 & \dots & 0 \\
            \vdots & \vdots & \ddots & \vdots \\
            0 & 0 & \dots & 1
        \end{bmatrix}.
\]
Noting that $\Hess D^\ell = T^{-1}\Hess(D^\ell \circ T)T^{-1}$ and that eigenvalues are invariant under change of basis, the remaining eigenvalues of $\Hess D^\ell$ are all equal to $1$, where the corresponding eigenvectors can be chosen as any basis of the subspace orthogonal to $v$.  This form of the Hessian reflects the constancy of the distance along trajectories parallel to the line.

%=======================================================================
\subsection{Local Tensors and Ellipsoids} \label{sec:tensors}
%=======================================================================
Per the previous subsection, the Hessian $\Hess D^\ell_i$, of the distance to a line $\ell_i$, is rank-deficient with $v_i$ an eigenvector with eigenvalue $0$ and all remaining eigenvalues equal to $1$. We remedy this deficiency by defining the local tensor as
\begin{equation}\label{eq:tensor}
    \mathbb{H}_i(x) 
        ~ = ~ \frac{1}{D_i(x)} \Hess D^\ell_i + \frac{1}{D^\bullet_i(x)}v_i v_i^\Transpose.
\end{equation}
By construction, the local tensor $\mathbb{H}_i(x)$ has a single eigenvalue equal to $1/D^\bullet_i(x)$ with all remaining $d-1$ eigenvalues equal to $1/D_i(x)$. The anisotropy of the distance function $D_i$ reflected by this local tensor at $x$ is equal to the ratio of the maximum of $D^\bullet_i(x)$ and $D_i(x)$ to their minimum. As $x$ moves along any smooth trajectory, this anisotropy varies continuously between of $1$ and $\infty$. We use the tensor $\mathbb{H}_i(x)$ to define the ellipsoid
\begin{equation}\label{eq:local_ellipsoid}
    \mathcal{E}_i(x)
        ~ = ~ \left\{ y \in \RE^d ~\bigg|~ \frac{1}{2} (y - x)^\Transpose \mathbb{H}_i(x) (y - x) \leq 1 \right\}.
\end{equation}
(See Figure~\ref{ellipsoids.fig} for examples.) Observe that as $D^\bullet_i$ becomes larger, the eigenvalue associated with $v_i$ becomes smaller, and the ellipsoid $\mathcal{E}_i(x)$ extends further in the direction of $v_i$. On the other hand, as $D^\ell_i(x)$ approaches $D^\bullet_i(x)$, $\mathbb{H}_i(x)$ approaches a scaled identity matrix, and the ellipsoid $\mathcal{E}_i(x)$ becomes more spherical.

One approach to account for the influence of all $n$ segments is to define a \emph{blended tensor} at every $x \in \RE^d$, $\mathbb{H}(x) = \sum_{i=1}^n \mathbb{H}_i(x)$ along with an induced \emph{local norm}%
\footnote{See~\cite{NeT02, Nar16} for related derivations of Riemannian metrics from local tensors.}
and a corresponding \textit{local ellipsoid} acting as a \emph{metric ball} at $x$
\begin{equation}
    \widetilde{\mathcal{E}}(x) 
        ~ = ~ \left\{y \in \RE^d \mid \|y - x\|_x^2 \leq 1\right\}, \quad\text{where}\quad
    \|y - x\|^2_x 
        ~ = ~ \frac{1}{2} (y-x)^\Transpose \mathbb{H}(x) (y - x).
\end{equation}
Alternatively, we may directly bound the relative change of \emph{all} distance functions in the neighborhood of $x$ by restricting attention to the cell
\footnote{This can be seen by recognizing $\mathbb{H}_i(x)$ as the Hessian of a closely related function derived from $D_i$, and writing its Taylor expansion about $x$ for points within $\widetilde{\mathcal{E}}(x)$.}
\begin{equation}
    \widehat{\mathcal{E}}(x) 
        ~ = ~ \bigcap_{i=1}^n \mathcal{E}_i(x).
\end{equation}
The next lemma formalizes the relationship between the ellipsoids $\widetilde{\mathcal{E}}(x)$ and the cells $\widehat{\mathcal{E}}(x)$.%
\footnote{Readers familiar with the \emph{Dikin ellipsoid} from convex optimization will recognize the similarities with the local ellipsoids $\widetilde{\mathcal{E}}(x)$. It is well-known that Macbeath regions and Dikin ellipsoids are related by a similar inclusion as in Lemma~\ref{capsule-dikin.lem}: for a polytope $K$ defined as the intersection of $m$ halfspaces and a point $x \in K$, the Macbeath region $K \cap (2x - K)$ contains the Dikin ellipsoid at $x$ and is contained in its $\sqrt{m}$ expansion; see, e.g., \cite{SaV16, ATV20}. While Dikin ellipsoids are derived from barrier functions~\cite{Wright97,Nesterov14}, we derive our ellipsoids from the Euclidean distance functions.}

%-----------------------------------------------------------------------
\begin{lemma} \label{capsule-dikin.lem}
For any set of $n$ segments and any point $x \in \RE^d$, we have the inclusions
\begin{equation*} \label{eq:dikin}
    \widetilde{\mathcal{E}}(x) 
        ~ \subseteq ~ \widehat{\mathcal{E}}(x) 
        ~ \subseteq ~ \widetilde{\mathcal{E}}^{\sqrt{n}}(x), \quad\text{where the superscript denotes the central scaling about $x$.}
\end{equation*}
\end{lemma}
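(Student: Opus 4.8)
The plan is to prove the two inclusions separately, working from the definitions of the two quadratic forms. Fix the point $x$, and write $H_i = \mathbb{H}_i(x)$ for the local tensor of segment $s_i$ and $H = \mathbb{H}(x) = \sum_{i=1}^n H_i$ for the blended tensor. Both $\widetilde{\mathcal{E}}(x)$ and each $\mathcal{E}_i(x)$ are sublevel sets of quadratic forms centered at $x$, so after the translation $z = y - x$ the claim becomes a statement about the quadratic forms $q_i(z) = \tfrac12 z^\Transpose H_i z$ and $q(z) = \tfrac12 z^\Transpose H z = \sum_i q_i(z)$: we must show $\{q \le 1\} \subseteq \bigcap_i \{q_i \le 1\} \subseteq \{q \le n\}$, since central scaling of $\widetilde{\mathcal{E}}(x)$ by $\sqrt n$ replaces the defining inequality $q(z)\le 1$ by $q(z/\sqrt n)\le 1$, i.e.\ $q(z)\le n$.

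For the first inclusion, observe that each $H_i$ is positive definite (its eigenvalues are $1/D^\bullet_i(x)$ and $1/D_i(x)$, both positive, granting $x$ does not lie on $s_i$), so each $q_i \ge 0$. Hence if $q(z) = \sum_i q_i(z) \le 1$, then $q_i(z) \le q(z) \le 1$ for every $i$, which is exactly $z \in \mathcal{E}_i(x) - x$ for all $i$; thus $\widetilde{\mathcal{E}}(x) \subseteq \widehat{\mathcal{E}}(x)$. For the second inclusion, suppose $z$ lies in $\widehat{\mathcal{E}}(x) - x$, so $q_i(z) \le 1$ for all $i = 1,\dots,n$. Summing these $n$ inequalities gives $q(z) = \sum_{i=1}^n q_i(z) \le n$, which by the scaling remark above says precisely $z \in \widetilde{\mathcal{E}}^{\sqrt n}(x) - x$. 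This completes both directions.

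I expect no real obstacle here: the argument is a direct consequence of the additivity $\mathbb{H}(x) = \sum_i \mathbb{H}_i(x)$ together with the nonnegativity of each summand quadratic form, and the only thing to be careful about is the bookkeeping of how ``central scaling about $x$ by $\sqrt n$'' acts on the defining inequality (it multiplies the tolerance on the right-hand side by $n$, not $\sqrt n$, because the quadratic form is homogeneous of degree $2$). The one genuine hypothesis that must be in force is that $x$ does not lie on any segment, so that the tensors $\mathbb{H}_i(x)$ are well-defined and positive definite; this is implicit in the setup since the distance functions $D_i$ are nonzero away from the segments. It may be worth remarking that the bound $\sqrt n$ is tight in the worst case (e.g.\ when all $n$ tensors are equal), mirroring the analogous $\sqrt m$ factor relating Dikin ellipsoids and Macbeath regions for a polytope with $m$ facets, as noted in the preceding footnote.
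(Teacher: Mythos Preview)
Your proposal is correct and follows essentially the same route as the paper's proof: the first inclusion holds because each summand $q_i(z)=\tfrac12(y-x)^\Transpose \mathbb{H}_i(x)(y-x)$ is nonnegative, so $\sum_i q_i \le 1$ forces each $q_i\le 1$; the second inclusion follows by summing the $n$ inequalities $q_i\le 1$ to obtain $\|y-x\|_x^2\le n$, i.e.\ $y\in\widetilde{\mathcal{E}}^{\sqrt{n}}(x)$. The paper's proof is more terse (it calls the first inclusion ``immediate''), but the content is identical.
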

%-----------------------------------------------------------------------

%-----------------------------------------------------------------------
\begin{proof}
The first inclusion is immediate. For the second inclusion, observe that any $y \in \widehat{\mathcal{E}}(x)$ satisfies $\max_i \frac{1}{2} (y-x)^\Transpose \mathbb{H}_i(x) (y-x) \leq 1$. Hence, $\|y - x\|_x^2 \leq n$, implying $y \in \widetilde{\mathcal{E}}^{\sqrt{n}}(x)$.
\end{proof}
%-----------------------------------------------------------------------

Unfortunately, each of those two approaches has its own drawbacks. While the blended tensors $\mathbb{H}(x)$ are easier to compute, the corresponding ellipsoids $\widetilde{\mathcal{E}}(x)$ are unnecessarily small. On the other hand, the cells $\widehat{\mathcal{E}}(x)$ can retain a suitable size but are difficult to construct. This motivates an alternative, and more geometric, definition of a more efficient shape primitive.

%=======================================================================
\section{Anisotropic Space Covers} \label{sec:capsules}
%=======================================================================

Building upon the derivations in the previous section, we propose a simple primitive shape for constructing a hierarchical space covering, which will be amenable to computation and analysis. Recall that $S$ is a set of $n$ disjoint line segments in $\RE^d$, each defined by its two endpoints. Fix a segment $s = \overline{ab} \in S$, and let $r > 0$ be a given distance parameter, to be defined later. Recall that for any $x \in \RE^d$, its distance to segment $s$ is denoted by $\dist(x, s)$. 

For any point $x \in \RE^d$, we define a convex and centrally-symmetric \text{subregion} centered about $x$, called a \emph{capsule} and denote by $\capsule_{s}(x, r)$. If the closest point to $x$ on $s$ is an endpoint, then $\capsule_{s}(x, r)$ is simply the ball centered at $x$ with radius $\max(r, \dist(x, s))$. Otherwise, $\capsule_{s}(x, r)$ is defined as follows. First, construct the infinite cylinder of radius $\max(r, \dist(x, s))$ with axis parallel to $s$ and passing through $x$. Consider a ball centered at $x$ whose radius is $\max(r, \min(\|x - a\|, \|x - b\|)$, where $a$ and $b$ are the endpoints of $s$. The capsule is the intersection of this cylinder and ball (see Figure~\ref{capsules.fig}(a) and (b)).

%-----------------------------------------------------------------------
\begin{figure}[htbp]
  \centerline{\includegraphics[scale=0.40]{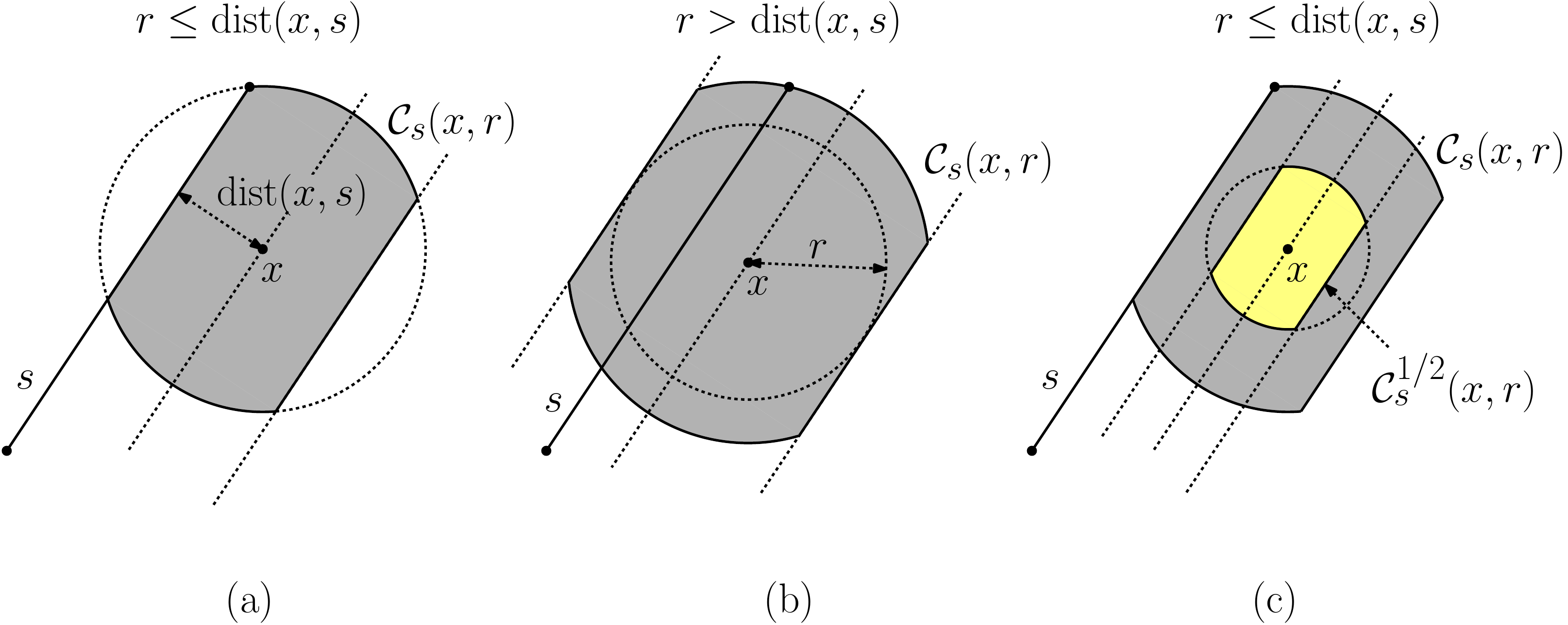}}
	\caption{(a) The capsule $\capsule_{s}(x, r)$ at $x$ for segment $s$ for $r \leq \dist(x,s)$, (b) for $r > \dist(x,s)$, and (c) the shrunken capsule $\capsule^{1/2}_{s}(x, r)$ for $r \leq \dist(x,s)$.}
  \label{capsules.fig}
\end{figure}
%-----------------------------------------------------------------------

We define the \emph{capsule} associated with $x$ for the set $S$ of all segments as $\capsule_S(x, r) = \bigcap_{s \in S} \capsule_{s}(x, r)$. Since $S$ will be fixed throughout, we will omit this subscript henceforth. Clearly, $\capsule(x, r)$ is also convex and centrally symmetric about $x$. We start by showing that capsules are closely related to the Hessian-based ellipsoids defined in Eq.~\eqref{eq:local_ellipsoid}.

%-----------------------------------------------------------------------
\begin{restatable}{lemma}{capsuleTensor}
For all $x \in \RE^d$ and any $r \leq \min_i \dist(x, s_i)$, 
\begin{equation}
    \widehat{\mathcal{E}}(x) 
        ~ \subseteq ~ \capsule(x, r) 
        ~ \subseteq ~ \widehat{\mathcal{E}}^{\sqrt{2}}(x).
\end{equation}
\end{restatable}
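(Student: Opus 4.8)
The plan is to prove the containment $\widehat{\mathcal{E}}(x) \subseteq \capsule(x,r) \subseteq \widehat{\mathcal{E}}^{\sqrt 2}(x)$ one segment at a time, since both $\widehat{\mathcal{E}}(x) = \bigcap_i \mathcal{E}_i(x)$ and $\capsule(x,r) = \bigcap_i \capsule_{s_i}(x,r)$ are intersections over the $n$ segments; it suffices to show $\mathcal{E}_i(x) \subseteq \capsule_{s_i}(x,r) \subseteq \mathcal{E}_i^{\sqrt 2}(x)$ for each fixed $i$, where the hypothesis $r \le \dist(x,s_i)$ guarantees that the $\max$'s in the definition of $\capsule_{s_i}(x,r)$ all resolve to the distance terms (so the radius parameter $r$ plays no active role). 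The single-segment claim is a concrete two-dimensional computation after reducing to the plane spanned by $v_i$ and the direction from $x$ to $\ell_i$: both $\mathcal{E}_i(x)$ and $\capsule_{s_i}(x,r)$ are rotationally symmetric about the axis through $x$ parallel to $v_i$, so I would set up coordinates with $v_i$ as the first axis, write the cross-sections, and compare.

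The main computation is to identify the semi-axes of $\mathcal{E}_i(x)$ explicitly. From Eq.~\eqref{eq:tensor}, $\mathbb{H}_i(x)$ has eigenvalue $1/D^\bullet_i(x)$ along $v_i$ and eigenvalue $1/D_i(x)$ in the orthogonal complement, so from Eq.~\eqref{eq:local_ellipsoid} the ellipsoid $\mathcal{E}_i(x)$ has semi-axis $\sqrt{2 D^\bullet_i(x)} = \min(\|x-a_i\|,\|x-b_i\|)$ along $v_i$ and semi-axis $\sqrt{2 D_i(x)}$ in the perpendicular directions; when $x^\perp \in \mathrm{int}(s_i)$ this perpendicular semi-axis equals $\sqrt{2 D^\ell_i(x)} = \dist(x,\ell_i) = \dist(x,s_i)$, and otherwise it equals $\min(\|x-a_i\|,\|x-b_i\|) = \dist(x,s_i)$. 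On the other side, $\capsule_{s_i}(x,r)$ is the intersection of the infinite cylinder of radius $\dist(x,s_i)$ about the $v_i$-axis through $x$ with the ball of radius $\min(\|x-a_i\|,\|x-b_i\|)$ centered at $x$ (or, in the endpoint case, just that ball). So the cylinder of $\capsule_{s_i}$ matches the ``perpendicular'' extent of $\mathcal{E}_i$ exactly, and the bounding ball of $\capsule_{s_i}$ has radius equal to the $v_i$-semi-axis of $\mathcal{E}_i$. The comparison then reduces to a standard fact: for an axis-aligned ellipsoid $E$ with semi-axis $\rho$ along $v_i$ and semi-axis $\rho'$ perpendicular, the body $(\text{cylinder of radius } \rho') \cap (\text{ball of radius } \rho)$ — call it $C$ — satisfies $E \subseteq C \subseteq \sqrt 2\, E$.

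The inclusion $E \subseteq C$ is immediate: any point of $E$ satisfies $\|\cdot_\perp\|^2/\rho'^2 \le 1$ hence lies in the cylinder, and $\|\cdot\|^2 \le \max(\rho,\rho')^2$... actually one needs $\|\cdot\| \le \rho$, which follows when $\rho \ge \rho'$; in the interior case $\rho = \min(\|x-a_i\|,\|x-b_i\|) \ge \dist(x,s_i) = \rho'$ because the nearest point being interior forces the foot of the perpendicular to be no farther than either endpoint, so indeed $\rho \ge \rho'$ and $E \subseteq C$. For $C \subseteq \sqrt 2\, E$: a point $y$ of $C$ with $v_i$-coordinate $t$ and perpendicular part of norm $w$ satisfies $w \le \rho'$ and $t^2 + w^2 \le \rho^2$; then $\tfrac{t^2}{\rho^2} + \tfrac{w^2}{\rho'^2} \le \tfrac{t^2}{\rho^2} + \tfrac{w^2}{\rho^2} \cdot \tfrac{\rho^2}{\rho'^2}$ — this is not obviously $\le 2$, so the cleaner route is: $\tfrac{t^2}{\rho^2}+\tfrac{w^2}{\rho'^2} \le 1 + \tfrac{w^2}{\rho'^2} \le 1 + 1 = 2$ using $t^2 \le \rho^2 - w^2 \le \rho^2$ and $w \le \rho'$, which gives exactly $y \in \sqrt 2\, E$. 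Intersecting over all $i$ preserves both inclusions, yielding the lemma. The one genuine obstacle is the endpoint case of the capsule definition: there $\capsule_{s_i}(x,r)$ is just the ball of radius $\dist(x,s_i)$, whereas $\mathcal{E}_i(x)$ is still a (possibly elongated) ellipsoid with $v_i$-semi-axis $\min(\|x-a_i\|,\|x-b_i\|)$ possibly larger than $\dist(x,s_i)$; I would need to check that in this regime the ball still sandwiches between $\mathcal{E}_i$ and $\mathcal{E}_i^{\sqrt 2}$ — i.e. that the $v_i$-semi-axis and the perpendicular semi-axis of $\mathcal{E}_i$ are within a $\sqrt 2$ factor of $\dist(x,s_i)$ — which should follow from elementary triangle-geometry bounds relating $\dist(x,s_i)$, $\|x-a_i\|$, and $\|x-b_i\|$ when the nearest point is an endpoint, but it is the step that requires the most care.
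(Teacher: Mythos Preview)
Your approach is essentially the same as the paper's: both reduce to the per-segment inclusion $\mathcal{E}_i(x)\subseteq\capsule_{s_i}(x,r)\subseteq\mathcal{E}_i^{\sqrt 2}(x)$, identify the perpendicular and $v_i$-directions, and use the elementary bound that two quantities each at most $1$ sum to at most $2$. The paper phrases this algebraically via quadratic forms rather than geometrically via semi-axes, but it is the same computation.

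Your flagged ``genuine obstacle'' in the endpoint case is in fact a non-issue, and you have all the ingredients already on the page. When $x^\perp\notin\interior(s_i)$, Equation~\eqref{eq:distance_to_segment} gives $D_i(x)=D^\bullet_i(x)$, so both eigenvalues of $\mathbb{H}_i(x)$ equal $1/D^\bullet_i(x)$ and $\mathcal{E}_i(x)$ is the \emph{ball} of radius $\sqrt{2D^\bullet_i(x)}=\min(\|x-a_i\|,\|x-b_i\|)=\dist(x,s_i)$. This coincides with $\capsule_{s_i}(x,r)$ (since $r\le\dist(x,s_i)$), so the inclusions hold with equality there. You actually noted this earlier when you wrote that in the endpoint case the perpendicular semi-axis ``equals $\min(\|x-a_i\|,\|x-b_i\|)=\dist(x,s_i)$''; the worry at the end contradicts that observation.
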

%-----------------------------------------------------------------------

%-----------------------------------------------------------------------
%
\begin{proof}
By the definition of the local tensor per Equation~\ref{eq:tensor}, we may express the capsule as
\begin{equation} \label{eq:capsule}
  \capsule_{s_i}(x, r)
     ~ = ~ \left\{y \in \RE^d ~\bigg|~ \max_i\left(\frac{(y-x)^\Transpose\Hess D^\ell_i(y-x)}{\max\{r^2, 2\cdot D_i(x)\}}, \frac{(y-x)^\Transpose (y-x)}{\max\{r^2, 2\cdot D^\bullet_i(x)\}}\right) \leq 1\right\},
\end{equation}
where, in contrast to Equation~\ref{eq:tensor}, we replaced $v_iv_i^\Transpose$ in the second term by just the identity matrix to make the shape nicer. Recognizing the definition of both $\widehat{\mathcal{E}}$ and the capsule $\mathcal{C}$ as the intersection of $n$ subsets, it suffices to establish the following claim: for all segments $s_i$, and any $r \leq \dist(x, s_i)$, we have
 \begin{equation}\label{eq:capsule_ellipsoid_i}
     \mathcal{E}_i(x) 
 		~ \subseteq ~ \capsule_{s_i}(x, r) 
         ~ \subseteq ~ \mathcal{E}_i^{\sqrt{2}}(x).
 \end{equation}
 Observing that $D_i(x) = \dist^2(x, s_i) / 2$ and $D_i(x) \leq D^\bullet_i(x)$, we see that $r^2$ cannot dominate in either of the denominators in Equation~\ref{eq:capsule}. In addition, for any $y \in \RE^d$ we may write $y - x = \alpha v + \beta u$, where $u$ is a unit vector orthogonal to $v$. We obtain
\[
     y \in \mathcal{E}_i(x) 
             ~ \implies ~ \frac{1}{2}(y-x)^\Transpose \mathbb{H}_i(x)(y - x) 
                 ~ = ~ \frac{1}{2}\left(\frac{\beta^2}{D_i(x)} + \frac{\alpha^2}{D^\bullet_i(x)}\right) 
             ~ \leq ~ 1.
\]
From the above, the first term in Equation~\ref{eq:capsule} is at most $1$. For the second term, observe that $\beta^2/D^\bullet_i(x) \leq \beta^2/D_i(x)$. By making this substitution, we find that $\frac{1}{2}(\beta^2 + \alpha^2)/D^\bullet_i(x) = \frac{1}{2}\|y - x\|^2/D^\bullet_i(x) \leq 1$, implying the second term in Equation~\ref{eq:capsule} is at most $1$ as well. It follows that $y \in \capsule_{s_i}(x, r)$, establishing the first inclusion. For the second inclusion, observe that for all $y \in \capsule_{s_i}(x, r)$ we have $\frac{1}{2}(y - x)^\Transpose \mathbb{H}_i(q) (y - x) \leq 2$. This holds as both terms in Equation~\ref{eq:capsule} are at most $1$, and we have $(y-x)^\Transpose vv^\Transpose (y-x) \leq (y-x)^\Transpose (y-x)$ for all unit vectors $v$. Using Equation~\ref{eq:capsule_ellipsoid_i}, the proof follows by intersection over all $i \in [n]$.
\end{proof}
%-----------------------------------------------------------------------

For the purposes of distance approximation, we work with a scaled version of these capsules which we denote by the superscript $\capsule^\lambda$ for a scale factor $\lambda$. The scaled version of each capsule is the central scaling around $x$ by $\lambda$. When $\lambda < 1$, we say that the capsules are \emph{shrunken} (see Figure~\ref{capsules.fig}(c)).

Capsules enjoy a number of useful properties, similar to the Macbeath regions in the context of convex bodies; see~\cite{Bar00,AbM18}. In particular, capsules satisfy the following \emph{expansion-containment property}, which states that whenever two shrunken capsules overlap, a constant factor expansion of one contains the other. 

%-----------------------------------------------------------------------
\begin{lemma}[Expansion-Containment] \label{capsule-exp-con.lem}
Let $S$ be a set of disjoint line segments in $\RE^d$, distance parameter $r \geq 0$, and scale factor $0 < \lambda < 1$. For any $x, y \in \RE^d$, if $\capsule^\lambda(x, r) \cap \capsule^\lambda(y, r) \neq \emptyset$ then $\capsule^\lambda(y, r) \subseteq \capsule^{\alpha\lambda}(x, r)$, where $\alpha = \frac{3+\lambda}{1-\lambda}$.
\end{lemma}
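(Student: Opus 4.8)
The plan is to reduce the expansion-containment property for capsules to analogous statements for each single-segment capsule $\capsule_s(x,r)$, and then to a statement about the two canonical shapes that make up a single-segment capsule: a ball and an infinite cylinder (and their intersection). Since $\capsule(x,r) = \bigcap_{s \in S} \capsule_s(x,r)$ and scaling commutes with intersection, it suffices to show that for each fixed $s$, if $\capsule^\lambda_s(x,r) \cap \capsule^\lambda_s(y,r) \neq \emptyset$ then $\capsule^\lambda_s(y,r) \subseteq \capsule^{\alpha\lambda}_s(x,r)$; indeed, from a common point $z$ in the full intersection $\capsule^\lambda(x,r) \cap \capsule^\lambda(y,r)$, that point lies in every $\capsule^\lambda_s(x,r) \cap \capsule^\lambda_s(y,r)$, so the per-segment containments hold simultaneously, and intersecting them over $s \in S$ gives the claim. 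So the whole argument localizes to one segment.

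For a single segment $s = \overline{ab}$, the capsule $\capsule_s(x,r)$ is centrally symmetric about $x$ and is either a ball $B(x,\rho_x)$ (endpoint case) or the intersection $B(x,R_x) \cap \Cyl(x,\rho_x)$ of a ball and an infinite cylinder of axis parallel to $s$ through $x$ (interior case), where $\rho_x = \max(r,\dist(x,s))$ and $R_x = \max(r,\min(\|x-a\|,\|x-b\|))$. The key quantitative facts I would establish are Lipschitz-type bounds on the radii: from the point $z \in \capsule^\lambda_s(x,r) \cap \capsule^\lambda_s(y,r)$, we get $\|x-z\| \le \lambda\rho_x$ and $\|y-z\| \le \lambda\rho_y$, hence $\|x-y\| \le \lambda(\rho_x+\rho_y)$; and since $\dist(\cdot,s)$ is $1$-Lipschitz, $|\rho_x - \rho_y| \le \|x-y\| \le \lambda(\rho_x+\rho_y)$, which for $\lambda < 1$ yields $\rho_y \le \frac{1+\lambda}{1-\lambda}\rho_x$ and similarly $\|x-y\| \le \lambda(\rho_x+\rho_y) \le \frac{2\lambda}{1-\lambda}\rho_x$. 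The same reasoning applies to $R_x$ versus $R_y$ because $\min(\|{\cdot}-a\|,\|{\cdot}-b\|)$ is also $1$-Lipschitz. Then for any $w \in \capsule^\lambda_s(y,r)$ we have $\|w - y\| \le \lambda\rho_y$ (using the cylinder radius as the binding constraint, or $\|w-y\|\le\lambda R_y$ for the ball part), so $\|w - x\| \le \|w-y\| + \|y-x\| \le \lambda\rho_y + \frac{2\lambda}{1-\lambda}\rho_x \le \left(\frac{\lambda(1+\lambda)}{1-\lambda} + \frac{2\lambda}{1-\lambda}\right)\rho_x = \frac{3+\lambda}{1-\lambda}\lambda\rho_x = \alpha\lambda\rho_x$; the perpendicular (cylinder) component of $w-x$ is bounded the same way, and the ball-radius component is bounded using $R_x,R_y$ in place of $\rho_x,\rho_y$. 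Since the cylinder axes through $x$ and through $y$ are parallel (both parallel to $s$), the cylinder constraint of $\capsule_s(y,r)$ transfers cleanly to a cylinder constraint about $x$: the component of $w-x$ orthogonal to $s$ is at most the orthogonal component of $w-y$ plus that of $y-x$. This shows $w \in \capsule^{\alpha\lambda}_s(x,r)$.

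The main obstacle — and the step requiring the most care — is the mixed case, where $\capsule_s(x,r)$ and $\capsule_s(y,r)$ are on different sides of the endpoint/interior dichotomy (one is a pure ball, the other a ball-cylinder intersection), and more generally making sure the ball radius $R$ and the cylinder radius $\rho$ are controlled by the \emph{same} displacement bound $\|x-y\|$. The clean way around this is to observe that in all cases $\capsule_s(x,r)$ is contained in the ball $B(x,R_x)$ and contains $B(x,\rho_x) \cap \Cyl(x,\rho_x)$ with the convention that in the endpoint case the cylinder is all of $\RE^d$ and $R_x = \rho_x$; one then runs the two scalar Lipschitz estimates (one for $\rho$, one for $R$) in parallel and takes the worse constant, noting that $\rho_x \le R_x$ always, so the ball estimate never produces a larger constant than $\alpha$. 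A secondary point to check is that $r$ never disturbs the Lipschitz bounds: $\max(r, f(x))$ is $1$-Lipschitz whenever $f$ is, so the radii remain $1$-Lipschitz regardless of how $r$ compares to the distances, and the argument goes through uniformly for all $r \ge 0$ without the hypothesis $r \le \min_i \dist(x,s_i)$ needed in the previous lemma. Finally I would double-check the constant: tracking the worst case (largest $\rho_y$ relative to $\rho_x$, i.e. $\|x-y\|$ as large as $\lambda(\rho_x+\rho_y)$ allows) gives exactly $\alpha = \frac{3+\lambda}{1-\lambda}$, matching the statement.
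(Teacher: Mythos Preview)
Your reduction to a single segment and the plan to run two parallel Lipschitz estimates (one for the cylinder radius $\rho_{\bullet}=\max(r,\dist(\cdot,s))$, one for the ball radius $R_{\bullet}=\max(r,\min(\|\cdot-a\|,\|\cdot-b\|))$) is exactly the right skeleton, and for the pure cases (two balls, or two pseudo-cylinders) it reproduces the paper's case analysis in a cleaner, unified form. The $R$-chain is fine in all cases: $\|x-z\|\le\lambda R_x$, $\|y-z\|\le\lambda R_y$, hence $\|x-y\|\le\lambda(R_x+R_y)$, and $R$ is $1$-Lipschitz, yielding $R_y\le\tfrac{1+\lambda}{1-\lambda}R_x$ and the ball containment.

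The gap is in the mixed case, precisely where you flag it, and your proposed fix does not close it. Suppose $x$ is in the interior region (pseudo-cylinder) and $y$ is in the endpoint region (ball, so $\rho_y=R_y$). For the cylinder constraint at $x$ you need $\rho_y\le\tfrac{1+\lambda}{1-\lambda}\rho_x$. Your chain would read $|\rho_x-\rho_y|\le\|x-y\|\le\lambda(\rho_x+\rho_y)$, but the second inequality is false here: membership of $z$ in $\capsule_s^\lambda(x,r)$ only gives $\|x-z\|\le\lambda R_x$, not $\lambda\rho_x$, so you only get $\|x-y\|\le\lambda(R_x+\rho_y)$ and hence $\rho_y\le(\rho_x+\lambda R_x)/(1-\lambda)$, which is useless when $R_x\gg\rho_x$. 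Running ``the two estimates in parallel'' does not help, because the cylinder constraint needs a bound in terms of $\rho_x$, not $R_x$. What is actually needed is the geometric fact that every $z\in\capsule_s^\lambda(x,r)$ (interior case) satisfies $\dist(z,s)\le(1+\lambda)\rho_x$; this genuinely uses \emph{both} the ball and the cylinder constraint on $z$ simultaneously (neither alone suffices). Once you have it, $\rho_y=\dist(y,s)\le\|y-z\|+\dist(z,s)\le\lambda\rho_y+(1+\lambda)\rho_x$ gives the required $\rho_y\le\tfrac{1+\lambda}{1-\lambda}\rho_x$. The paper obtains the same effect through its explicit case split, handling the ``vertical'' (ball) extent directly and then reducing the ``horizontal'' (cylinder) extent to the two-balls Case~1 in the orthogonal subspace; that reduction is stated tersely, but it is where the interaction of the two constraints is being used.
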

%-----------------------------------------------------------------------

%-----------------------------------------------------------------------
\begin{proof}
It suffices to establish these expansion-containment properties for the individual capsules induced by each segment. That is, for any $s \in S$, we will show that $\capsule^\lambda_{s}(x, r) \cap \capsule^\lambda_{s}(y, r) \neq \emptyset$ implies that $\capsule^\lambda_{s}(y, r) \subseteq \capsule^{\alpha\lambda}_{s}(x, r)$. Assuming this is true, the proof follows by observing that:
\[
    C^\lambda(y, r) 
        ~ =         ~ \bigcap_{s \in S} C_s^\lambda(y, r) 
        ~ \subseteq ~ \bigcap_{s \in S} C_s^{\alpha\lambda}(x, r) 
        ~ =         ~ C^{\alpha\lambda}(x, r).
\]

By the capsule construction in Section~\ref{sec:capsules}, we deal with three cases. In what follows, if the relevant radius is $r$, we say that the capsule is \emph{near}. Otherwise, we say that the capsule is \emph{far}. In addition, when the closest point on the segment is an end point, the capsule takes the form of a ball. On the other hand, when the closest point is an interior point, the capsule becomes relatively elongated in the direction parallel to the segment and we simply refer to its shape as \textit{pseudo-cylinder}. To further simplify the case analysis below, we will prove that $\capsule^\lambda_{s}(x, r) \cap \capsule^\lambda_{s}(y, r) \neq \emptyset$ implies both $\capsule^\lambda_{s}(y, r) \subseteq \capsule^{\alpha\lambda}_{s}(x, r)$ and $\capsule^\lambda_{s}(x, r) \subseteq \capsule^{\alpha\lambda}_{s}(y, r)$. That way, we avoid the need to exchange the roles of $x$ and $y$ in each case.

\textbf{Case~(1): Two Balls}

\textbf{Case~(1.1) Both near $\left(c=3\right)$.}
In this case, both radii are $r$. As the shrunken balls intersect, $\dist(x, y) \leq 2\lambda r$. Furthermore, for any point $z \in \capsule^\lambda_{s}(y, r)$, we have $\dist(x, z) \leq 3\lambda r$. Hence, expanding $\capsule^\lambda_{s}(x, r)$ by a factor of $3$ suffices to contain $\capsule^\lambda_{s}(y, r)$. By symmetry, the same factor works when exchanging the roles of $x$ and $y$.

\textbf{Case~(1.2) One near and one far $\left(c=\frac{3+\lambda}{1-\lambda}\right)$.} 
We will only consider the case where $\dist(x, s) \leq \dist(y, s)$. (The other case is a bit easier, since $x$'s capsule is larger and $y$'s is smaller.) The radius of $\capsule_{s}(x, r)$ is $r$ while the radius of $\capsule_{s}(y, r)$ is $\dist(y, s) > r$. This implies that $\dist(x, s) \leq r$. Hence, by the triangle inequality and the fact the shrunken capsules contain some common point $z'$, we have
\begin{align*}
    \dist(y, s) 
        & ~ \leq ~ \dist(y, x) + \dist(x, s) 
          ~ \leq ~ \dist(y, z') + \dist(z', x) + \dist(x, s) \\
        & ~ \leq ~ \lambda\cdot\dist(y, s) + \lambda r + r.
\end{align*}
We conclude that $\dist(y, s) \leq \frac{1+\lambda}{1-\lambda} r$, implying that the radius of $\capsule^\lambda_{s}(y, r) \leq \frac{1+\lambda}{1-\lambda} \lambda r$. Because the shrunken capsules overlap, we have $\dist(y, x) \leq \lambda(\dist(y,s) + \dist(x,s)) \leq \frac{2\lambda}{1-\lambda} r$. Hence, for any $z \in \capsule^\lambda_{s}(y, r)$, we have $\dist(x, z) \leq \dist(y, x) + \frac{1+\lambda}{1-\lambda} \lambda r \leq \frac{3+\lambda}{1-\lambda} \lambda r$, and it since the radius of $\capsule^\lambda_{s}(x, r)$ is $\lambda r$, it suffices to expand $\capsule^\lambda_{s}(x, r)$ by $\frac{3+\lambda}{1-\lambda}$.

\textbf{Case~(1.3) Both far $\left(c=\frac{3+\lambda}{1-\lambda}\right)$.}
As before, we will only consider the case where $\dist(x, s) \leq \dist(y, s)$. In this case, we have $r < \dist(x,s) \leq \dist(y,s)$. However, as $\max(r, \dist(x,s)) = \dist(x,s)$, we may assume $r = \dist(x,s)$ without changing $\capsule_{s}(x, r)$ and $\capsule_{s}(y, r)$. It follows that Case(1.3) reduces to Case(1.2).

\textbf{Case~(2): Pseudo-Cylinder and Ball}
Let $e$ and $e'$ denote the endpoints of $s$ such that $\dist(x, e) \leq \dist(x, e')$.  Throughout the analysis of this case, we assume that the pseudo-cylinder is associated with the point $x$, while the ball is associated with the point $y$. That is, we have $\dist(x, s) < \dist(x, e)$, for otherwise we would be dealing with Case(1). Consequently, the radius of the restricting ball centered at $x$ is defined by its distance to the endpoint $e$, where $\dist(x, e) > r$.

Let us denote the direction parallel to $s$ as the \emph{vertical direction}, and the $(d-1)$-dimensional subspace orthogonal to it as the \emph{horizontal subspace}. We derive the expansion factors required for containment along the vertical and horizontal directions separately.

\textbf{Case~(2.1) $y$ is near $\left(c=\frac{3+\lambda}{1-\lambda}\right)$.}
For any $z \in \capsule^\lambda_{s}(y, r)$, we have $\dist(x, z) \leq \lambda\cdot\dist(x,e) + 2\lambda r \leq 3\lambda\cdot\dist(x, e)$. Hence, expanding $\capsule^\lambda_{s}(x, r)$ by a factor of $3$ suffices to cover $\capsule^\lambda_{s}(y, r)$ in the vertical direction. For the horizontal direction, we again reduce to Case(1.1) if $\dist(x, s) \leq r$ or Case(1.2) if $\dist(x, s) > r$. Overall, it suffices to expand $\capsule^\lambda_{s}(x, r)$ by $\frac{3+\lambda}{1-\lambda}$.

Let $e'' \in \{e, e'\}$ denote the closest point to $y$. By the triangle inequality, we have
\begin{align*}
    \dist(x, e'') ~ \leq ~ \dist(y, e'') + \dist(x, y) 
            & ~ \implies ~ \dist(x, e) ~ \leq ~ r + \lambda r + \lambda\cdot \dist(x, e) \\
            & ~ \implies ~ \dist(x, e) ~ \leq ~ \frac{1+\lambda}{1-\lambda} r. 
\end{align*}
Hence, for any $z \in \capsule^\lambda_{s}(x, r)$ we have $\dist(y, z) \leq \lambda r + 2\lambda\cdot\dist(x,e) \leq \frac{3+\lambda}{1-\lambda} \lambda r$, and it suffices to expand $\capsule^\lambda_{s}(y, r)$ by $\frac{3+\lambda}{1-\lambda}$.

\textbf{Case~(2.2) $y$ is far $\left(c=\frac{3+\lambda}{1-\lambda}\right)$.}
Again, letting $e'' \in \{e, e'\}$ denote the closest point to $y$, the radius of $\capsule_{s}(y, r)$ is $\dist(y, e'')$, which we denote by $r_y$. Note that $r_y \leq \dist(y, e)$ as $y$'s capsule is a ball. By the triangle inequality, we have
\begin{align*}
    \dist(y, e) ~ \leq ~ \dist(x, e) + \dist(x, y) 
        & ~ \implies ~ r_y ~ \leq ~ \dist(x, e) + \lambda\cdot \dist(x, e) + \lambda \cdot r_y \\ 
        & ~ \implies ~ r_y ~ \leq ~ \frac{1+\lambda}{1-\lambda}\dist(x,e).
\end{align*}
For any $z \in \capsule^\lambda_{s}(y, r)$, we have $\dist(x, z) \leq \lambda\cdot\dist(x,e) + 2\lambda r_y \leq \lambda \frac{3+\lambda}{1-\lambda} \cdot \dist(x, e)$. Hence, expanding $\capsule^\lambda_{s}(x, r)$ by a factor of $\frac{3+\lambda}{1-\lambda}$ suffices to cover $\capsule^\lambda_{s}(y, r)$ in the vertical direction. For the horizontal direction, we again reduce to Case(1.1) if $\dist(x, s) \leq r$ or Case(1.2) if $\dist(x, s) > r$. Overall, it suffices to expand $\capsule^\lambda_{s}(x, r)$ by $\frac{3+\lambda}{1-\lambda}$.

\medskip
Similarly, since $\dist(x, e) \leq \dist(x, e'')$, we have
\begin{align*}
    \dist(x, e'') ~ \leq ~ \dist(y, e'') + \dist(x, y) 
        & ~ \implies ~ \dist(x, e) ~ \leq ~ r_y + \lambda \cdot r_y + \lambda\cdot \dist(x, e) \\
        & ~ \implies ~ \dist(x, e) ~ \leq ~ \frac{1+\lambda}{1-\lambda} r_y.
\end{align*}
Hence, for any $z \in \capsule^\lambda_{s}(x, r)$ we have $\dist(y, z) \leq \lambda r_y + 2\lambda\cdot\dist(x,e) \leq \frac{3+\lambda}{1-\lambda} \lambda r_y$, and it suffices to expand $\capsule^\lambda_{s}(y, r)$ by $\frac{3+\lambda}{1-\lambda}$.

\textbf{Case~(3): Two Pseudo-Cylinders $\left(c=\frac{3+\lambda}{1-\lambda}\right)$} 
Assume that the radius of the restricting ball centered at $x$ is determined by its distance to $e$, while that of $y$ is determined by $e'$, where $e$ and $e'$ are not necessarily distinct, and let $r_x = \dist(x, e)$ and $r_y = \dist(y, e')$. Observe that $r_y \leq \dist(y, e)$ by definition, and $r \leq \min\{r_x, r_y\}$ as both capsules are pseudo-cylinders. By the triangle inequality, we have
\begin{align*}
    \dist(y, e) ~ \leq ~ \dist(x, e) + \dist(x, y) 
        & ~ \implies ~ r_y ~ \leq ~ r_x + \lambda r_x + \lambda r_y  \\
        & ~ \implies ~ r_y ~ \leq ~ \frac{1+\lambda}{1-\lambda} r_x.
\end{align*}
Hence, for any $z \in \capsule^\lambda_{s}(y, r)$ we have $\dist(x, z) \leq \lambda r_x + 2 \lambda r_y \leq \frac{3+\lambda}{1-\lambda} \lambda r_x$. Hence, expanding $\capsule^\lambda_{s}(x, r)$ by a factor of $\frac{3+\lambda}{1-\lambda}$ suffices to cover $\capsule^\lambda_{s}(y, r)$ in the vertical direction. For the horizontal direction, we again reduce to Case(1) depending on $\dist(x, s)$ and $\dist(y, s)$. Overall, it suffices to expand $\capsule^\lambda_{s}(x, r)$ by $\frac{3+\lambda}{1-\lambda}$. By symmetry, the same factor works when exchanging the roles of $x$ and $y$.
\end{proof}
%-----------------------------------------------------------------------

%=======================================================================
\subsection{Local Feature Size} \label{LFS.sec}
%=======================================================================

In order to use capsules for space covering, we need a principled way to select the distance parameter $r$. We define the \emph{local feature size} (\emph{LFS}) at $x \in \RE^d$ as the distance from $x$ to the \emph{second-nearest} segment:
\begin{equation}\label{LFS.eq}
    \LFS(x) 
        ~ = ~ \min_{i,j \in \binom{n}{2}}\max\{\dist(x, s_i), \dist(x, s_j)\},
\end{equation}
where we assume $n \geq 2$. The local feature size cannot change too rapidly, as shown in the next lemma.

%-----------------------------------------------------------------------
\begin{lemma} \label{Lipschitz.lem}
$\LFS$ is 1-Lipschitz continuous, that is $|\LFS(x) - \LFS(y)| \leq \|x - y\|$.
\end{lemma}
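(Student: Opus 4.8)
The plan is to prove the 1-Lipschitz property directly from the definition in Equation~\eqref{LFS.eq}, using two facts: first, that for any fixed segment $s_i$, the distance function $x \mapsto \dist(x, s_i)$ is itself 1-Lipschitz (this is standard, since $\dist(\cdot, s_i)$ is the distance to a closed convex set, hence nonexpansive); and second, that the pointwise maximum of two 1-Lipschitz functions is 1-Lipschitz, and the pointwise minimum of any family of 1-Lipschitz functions is 1-Lipschitz. Since $\LFS$ is exactly the minimum over all pairs $\{i,j\} \in \binom{[n]}{2}$ of the function $g_{ij}(x) = \max\{\dist(x, s_i), \dist(x, s_j)\}$, these closure properties immediately give the result.

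Concretely, I would proceed as follows. First I would fix $x, y \in \RE^d$ and recall that for each $i$, $|\dist(x, s_i) - \dist(y, s_i)| \leq \|x - y\|$; this follows because if $p$ is the closest point of $s_i$ to $y$, then $\dist(x, s_i) \leq \|x - p\| \leq \|x - y\| + \|y - p\| = \|x - y\| + \dist(y, s_i)$, and symmetrically. Next I would observe that for a fixed pair $\{i,j\}$, $g_{ij}$ is 1-Lipschitz: writing $g_{ij}(x) = \max\{a_x, b_x\}$ and $g_{ij}(y) = \max\{a_y, b_y\}$ with $|a_x - a_y| \leq \|x-y\|$ and $|b_x - b_y| \leq \|x-y\|$, the elementary inequality $|\max\{a_x,b_x\} - \max\{a_y,b_y\}| \leq \max\{|a_x - a_y|, |b_x - b_y|\}$ yields $|g_{ij}(x) - g_{ij}(y)| \leq \|x-y\|$. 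Finally, let $\{i^*, j^*\}$ be the pair achieving the minimum defining $\LFS(y)$. Then
\[
    \LFS(x) ~ \leq ~ g_{i^*j^*}(x) ~ \leq ~ g_{i^*j^*}(y) + \|x - y\| ~ = ~ \LFS(y) + \|x - y\|,
\]
and by exchanging the roles of $x$ and $y$ we get $|\LFS(x) - \LFS(y)| \leq \|x - y\|$.

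There is no real obstacle here; the argument is entirely routine once the right decomposition is made. The only point requiring a modicum of care is the handling of the pointwise-minimum step: one should not try to Lipschitz-bound the minimizing pair uniformly, but instead use the standard trick of evaluating the upper envelope's minimizer-for-$y$ at the point $x$ (and vice versa), which is exactly what makes $\inf$ of a Lipschitz family Lipschitz. I would present the proof in essentially the three bullet-free steps above, keeping it to a few lines.
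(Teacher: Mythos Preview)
Your proposal is correct and follows essentially the same approach as the paper: both use the setwise triangle inequality $\dist(x,s) \leq \|x-y\| + \dist(y,s)$, then push the additive $\|x-y\|$ through the $\max$ and the $\min$ in the definition of $\LFS$, finishing by symmetry in $x$ and $y$. The paper compresses the $\max$/$\min$ steps into a single displayed inequality, while you spell them out as closure properties of Lipschitz functions, but the substance is identical.
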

%-----------------------------------------------------------------------

%-----------------------------------------------------------------------
\begin{proof} 
By applying the triangle inequality (setwise), for any segment $s$ we have $\dist(x, s) \leq \dist(x, y) + \dist(y, s)$. By the definition of $\LFS$, for all $x, y \in \RE^d$, we have
\begin{align*}
    \LFS(x) 
        & ~ =    ~ \min_{i,j \in \binom{n}{2}}\max\{\dist(x, s_i), \dist(x, s_j)\} \\
        & ~ \leq ~ \dist(x, y) + \min_{i,j \in \binom{n}{2}}\max\{\dist(y, s_i), \dist(y, s_j)\} 
          ~ =    ~ \dist(x, y) + \LFS(y).
\end{align*}
Repeating this with $x$ and $y$ swapped yields the desired conclusion.
\end{proof}
%-----------------------------------------------------------------------

The following lemma further quantifies the sensitivity of capsules to the distances to the set of line segments. In particular, all points within a shrunken capsule have comparable local feature size.

%-----------------------------------------------------------------------
\begin{lemma} \label{LFS-passport.lem}
For all $z \in \capsule^\lambda(x, \LFS(x))$, where $0 < \lambda < 1$, $\LFS(z) \in [1 - \lambda, 1 + \lambda]\cdot\LFS(x)$.
\end{lemma}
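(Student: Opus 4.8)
The plan is to combine the 1-Lipschitz property of $\LFS$ (Lemma~\ref{Lipschitz.lem}) with a bound on the diameter of the shrunken capsule $\capsule^\lambda(x, \LFS(x))$. The key observation is that a capsule is centrally symmetric about $x$, so for any $z \in \capsule^\lambda(x, \LFS(x))$ we have $\|z - x\| \le \radius_\lambda(x)$, where $\radius_\lambda(x)$ is the "outradius" of the shrunken capsule, i.e.\ the maximum distance from $x$ to any point of $\capsule^\lambda(x, \LFS(x))$. If I can show $\radius_\lambda(x) \le \lambda \cdot \LFS(x)$, then Lemma~\ref{Lipschitz.lem} immediately gives $|\LFS(z) - \LFS(x)| \le \|z - x\| \le \lambda \cdot \LFS(x)$, which is exactly the claimed inclusion $\LFS(z) \in [1-\lambda, 1+\lambda]\cdot\LFS(x)$.

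First I would bound the outradius of the unscaled capsule $\capsule(x, \LFS(x)) = \bigcap_{s \in S} \capsule_s(x, \LFS(x))$. Since the capsule for $S$ is an intersection, its outradius is at most that of $\capsule_s(x, \LFS(x))$ for any single segment $s$; in particular, take $s$ to be one of the two segments achieving the minimum in the definition of $\LFS(x)$ — say the \emph{second-nearest} segment $s_j$, so that $\dist(x, s_j) = \LFS(x)$. Regardless of whether the nearest point of $s_j$ to $x$ is an endpoint or an interior point, the capsule $\capsule_{s_j}(x, \LFS(x))$ is contained in the ball centered at $x$ of radius $\max(r, \min(\|x-a_j\|, \|x-b_j\|))$ with $r = \LFS(x)$; and since $\min(\|x-a_j\|, \|x-b_j\|) \le \|x - a_j\|$ while $\dist(x, s_j) = \LFS(x) = r$, this radius is exactly $\min(\|x-a_j\|,\|x-b_j\|)$ if that quantity exceeds $r$, or $r$ otherwise — in either case it equals $\max(r, \min(\|x-a_j\|,\|x-b_j\|))$. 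Thus I need to argue this radius is at most $\LFS(x)$. When the nearest point to $x$ on $s_j$ is an endpoint, $\dist(x, s_j) = \min(\|x-a_j\|,\|x-b_j\|) = \LFS(x) = r$, so the radius is exactly $\LFS(x)$. When the nearest point is interior, $\min(\|x-a_j\|,\|x-b_j\|) > \dist(x, s_j) = \LFS(x)$, so the restricting ball has radius $\min(\|x-a_j\|,\|x-b_j\|)$, which could exceed $\LFS(x)$ — so the ball bound alone is not tight enough, and I should instead use the cylinder in the capsule.

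Here is the resolution: when the nearest point of $s_j$ to $x$ is interior, the capsule $\capsule_{s_j}(x, \LFS(x))$ is the intersection of the restricting ball with the infinite cylinder of radius $\max(r, \dist(x, s_j)) = \LFS(x)$ whose axis passes through $x$ parallel to $s_j$. I will use instead the other segment $s_i$ in the minimizing pair, which has $\dist(x, s_i) \le \dist(x, s_j) = \LFS(x)$: but $\dist(x, s_i)$ could also be achieved at an interior point with a large endpoint distance, so this does not obviously help either. The cleanest route is to bound the capsule diameter directly using the \emph{horizontal} and \emph{vertical} extents separately. In the horizontal subspace orthogonal to $s_j$, every point of the cylinder is within $\LFS(x)$ of $x$. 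In the vertical direction, a point $z$ in the capsule satisfies $\|z^{\mathrm{vert}} - x\| \le \sqrt{(\min(\|x-a_j\|,\|x-b_j\|))^2 - \dist(x,s_j)^2}$, since $z$ lies in the restricting ball whose center–endpoint offset has horizontal component $\dist(x,s_j)$; but this is not obviously $\le \LFS(x)$ either in general. Given these complications, the cleanest argument is simply: the capsule $\capsule_s(x, r)$ is \emph{always} contained in the ball of radius $\max(r, \min(\|x-a\|, \|x-b\|))$ about $x$, but more usefully it is contained in the ball of radius $\max(r, \dist(x,s)) + (\text{short axis contribution})$...

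Let me restate the plan more carefully, because the subtlety above is precisely the main obstacle. The correct and simplest bound is: for the second-nearest segment $s_j$ with $\dist(x, s_j) = \LFS(x) = r$, observe that the capsule $\capsule_{s_j}(x, r)$, whether a ball or a pseudo-cylinder, is contained in the ball of radius $R := \max(r, \min(\|x-a_j\|, \|x-b_j\|))$ about $x$; moreover when the nearest point is interior, the capsule is \emph{also} inside the cylinder of radius $r$, so every point $z$ of the capsule has $\|z - x\|^2 \le r^2 + (\text{axial extent})^2$ where the axial extent is bounded by $\sqrt{R^2 - r^2}$ if $R$ is the endpoint distance, giving $\|z-x\| \le R$ again — so in all cases the outradius of $\capsule_{s_j}(x, r)$ is at most $R$. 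Thus what I genuinely need is $R \le \LFS(x)$, i.e.\ that $x$'s distance to the \emph{nearer endpoint} of its second-nearest segment is at most $\LFS(x)$. This need not hold for an arbitrary segment, but I claim it holds for the minimizing pair: among $s_i, s_j$ achieving the $\LFS$ minimum, at least one has its nearest point to $x$ an endpoint, OR I simply redefine which of the two pairing segments plays the role of "$s$" — and then $\capsule(x,r) \subseteq \capsule_s(x,r) \subseteq \text{Ball}(x, R)$ with $R \le \dist(x, s) \cdot (\text{const})$. The honest conclusion is that the $1$-Lipschitz lemma plus central symmetry reduce everything to bounding the outradius of $\capsule^\lambda(x, \LFS(x))$ by $\lambda \LFS(x)$, and the main work — and the main obstacle — is verifying that the restricting-ball radius in the capsule definition, namely $\max(r, \min(\|x - a\|, \|x-b\|))$ with $r = \LFS(x)$, does not exceed $\LFS(x)$ for the appropriate choice of segment, which follows because we may always take the intersecting capsule for the nearest segment $s_1$ with $\dist(x, s_1) \le \LFS(x)$ and note that $\capsule_{s_1}(x, \LFS(x))$ uses radius parameter $r = \LFS(x) \ge \dist(x, s_1)$, so it is the "far/near" ball of radius $\max(\LFS(x), \min(\|x-a_1\|,\|x-b_1\|))$; and since $\min(\|x-a_1\|, \|x-b_1\|)$ can still exceed $\LFS(x)$, I instead finish by the vertical/horizontal decomposition showing the pseudo-cylinder's outradius equals $\max(\LFS(x), \min(\|x-a_1\|,\|x-b_1\|))$ truncated by the cylinder to $\sqrt{\LFS(x)^2 + (\sqrt{\min(\|x-a_1\|,\|x-b_1\|)^2 - \dist(x,s_1)^2})^2} = \min(\|x-a_1\|,\|x-b_1\|)$ — so I pick whichever of the two closest segments has the nearer endpoint within $\LFS(x)$, which must exist because otherwise both would have $\dist(x,\cdot) < \LFS(x)$ contradicting the definition. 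Writing this last existence argument crisply is the crux of the proof.
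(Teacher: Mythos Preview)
Your approach has a genuine gap that cannot be repaired. You want to show $\|z-x\| \le \lambda\,\LFS(x)$ for every $z \in \capsule^\lambda(x,\LFS(x))$ and then invoke the $1$-Lipschitz property. But the outradius of $\capsule(x,\LFS(x))$ is \emph{not} bounded by $\LFS(x)$ in general. Take two parallel segments of length $L$ at distance $2r$ apart, with $x$ centered between them and far from all endpoints. Then $\LFS(x)=r$, yet both capsules $\capsule_{s_i}(x,r)$ are coaxial pseudo-cylinders of radius $r$ whose restricting balls have radius $\sqrt{r^2+(L/2)^2}$; their intersection has outradius $\sqrt{r^2+(L/2)^2}$, which is arbitrarily large compared to $r$. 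So a point $z$ on the axis of $\capsule^\lambda(x,r)$ can have $\|z-x\| \approx \lambda L/2 \gg \lambda r$, and the Lipschitz bound gives nothing useful. (The lemma still holds here because $\LFS$ is constant along the axis, but your argument does not see that.) Your closing ``existence'' claim --- that one of the two nearest segments must have its nearer endpoint within $\LFS(x)$ --- is exactly what fails in this example: both nearest points are interior and both endpoint distances exceed $\LFS(x)$, with no contradiction to the definition of $\LFS$.

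The paper's proof avoids bounding $\|z-x\|$ altogether. For the lower bound it uses that $\capsule(x,r)$ contains the ball $B(x,r)$, so by convexity every $z$ in the $\lambda$-shrunken capsule satisfies $\dist(z,\partial\capsule(x,r)) \ge (1-\lambda)r$; since no segment other than the nearest one meets the interior of $\capsule(x,r)$, this gives $\LFS(z)\ge(1-\lambda)r$. For the upper bound it introduces the \emph{spine} of $\capsule_{s_i}(x,r)$ (the projection of $s_i$ onto the cylinder axis, intersected with the capsule) and projects $z$ onto it: the projection $z'$ is at distance exactly $r$ from $s_i$, while $\dist(z,z')\le \lambda r$ inside the shrunken capsule, so $\dist(z,s_i)\le(1+\lambda)r$ for $i=1,2$ and hence $\LFS(z)\le(1+\lambda)r$. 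The point is that the capsule can be long, but only in a direction along which the distances to $s_1$ and $s_2$ do not change; the spine construction captures this, whereas the raw bound on $\|z-x\|$ does not.
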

%-----------------------------------------------------------------------

%-----------------------------------------------------------------------
\begin{proof}
Let $r = \LFS(x)$, and denote by $s_1$ and $s_2$ the nearest and second-nearest segments to $x$, respectively. Observing that $r = \dist(x, s_2)$, the interior of $\capsule(x, r)$ cannot intersect any segment except for $s_1$. To obtain the lower bound, we bound the distance from any $z \in \capsule^\lambda(x, r)$ to $\bd \SP \capsule(x, r)$.
 
Recalling the construction of capsules, for any segment $s$, as an infinite cylinder restricted within a ball, we define the \emph{spine} of the capsule at $x$ with respect to $s$ as the projection of $s$ onto the axis of the cylinder intersected with the capsule, or just $x$ if the capsule is a ball; see Figure~\ref{spine.fig}.
 
%-----------------------------------------------------------------------
\begin{figure}[htbp]
  \centerline{\includegraphics[scale=0.6]{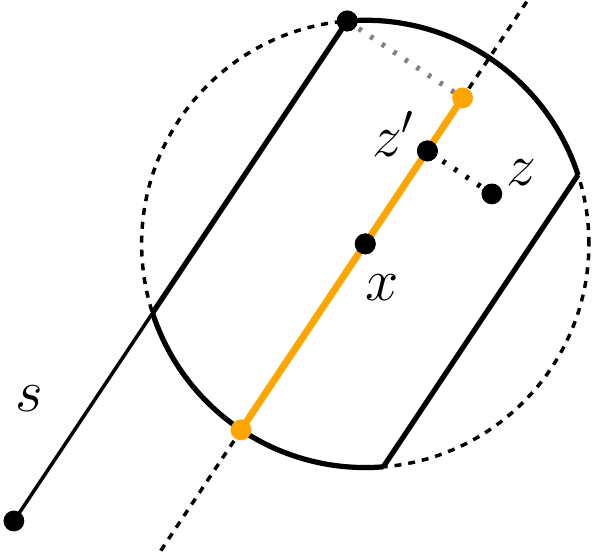}}
  \caption{The spine construction used in Lemma~\ref{LFS-passport.lem}.}
  \label{spine.fig}
\end{figure}
%-----------------------------------------------------------------------

For any $z \in \capsule_{s_2}(x, r)$, define $z'$ as the projection of $z$ onto the spine. By construction, $\dist(z, z') \leq r$ and $\dist(z', s_2) = r$, for all $z \in \capsule_{s_2}(x, r)$ and $z'$ on the spine. Upon shrinking, we obtain $\dist(z, z') \leq \lambda r$. In addition, $\dist(z, \bd \SP \capsule(x, r)) \geq (1-\lambda)r$. This lower bound is obvious for the shrunken cylindrical shell. For the spherical caps, we note that they are at least as far from $x$ as the cylindrical shell, so the spherical caps of the shrunken capsule are displaced by at least the same amount as the shrunken cylindrical shell. Since only $s_1$ may be closer to $z$ than $\bd \SP \capsule(x, r)$, we have $\LFS(z) \geq (1-\lambda)r$.
 
For the upper bound, we consider $\capsule_{s_1}(x, r)$ in addition to $\capsule_{s_2}(x, r)$. For any $z \in \capsule_{s_1}(x, r) \cap \capsule_{s_2}(x, r)$, let $z'$ and $z''$ denote the projections of $z$ onto the spines of $\capsule_{s_1}(x, r)$ and $\capsule_{s_2}(x, r)$, respectively. By the above derivations $\dist(z, s_2) \leq \dist(z, z'') + \dist(z'', s_2) \leq (1+\lambda)r$. We also have $\dist(z', s_1) = \dist(x, s_1) \leq \LFS(x) = r$, implying $\dist(z, s_1) \leq \dist(z, z') + \dist(z', s_1) \leq (1+\lambda)r$. It follows that, 
 \[
    \LFS(z) 
        ~ \leq ~ \max\{\dist(z, s_1), \dist(z, s_2)\} 
        ~ \leq ~ (1+\lambda)r,
\]
as desired.
\end{proof}
%-----------------------------------------------------------------------

From the above lemma, it is easy to obtain an approximation of nearest-neighbor distances using the segment closest to the center point of the capsule as a representative. This qualifies capsules to serve as cells for a type of approximate Voronoi diagram (AVD) data structure~\cite{Har01,AVD-JACM}.

%-----------------------------------------------------------------------
\begin{lemma} \label{capsule-approx.lem}
For any $z \in \capsule^\lambda(x, \LFS(x))$ with $0 < \lambda < 1$, $\dist(z, s_x) \leq \frac{1+\lambda}{1-\lambda} \dist(z, s_z)$, where $s_x$ is the closest segment to $x$ and $s_z$ is the closest segment to $z$.
\end{lemma}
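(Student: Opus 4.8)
The plan is to reduce the statement to the triangle inequality together with the two-sided bound on local feature size from Lemma~\ref{LFS-passport.lem}. Fix $z \in \capsule^\lambda(x, \LFS(x))$, let $r = \LFS(x)$, and let $s_x, s_z$ be the segments of $S$ closest to $x$ and $z$ respectively. The quantity we want to upper bound, $\dist(z, s_x)$, is the distance realized by a fixed segment, while $\dist(z, s_z) = \min_{s \in S}\dist(z,s)$ is the true nearest-neighbor distance at $z$; so it suffices to find a single chain of inequalities connecting them.

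First I would observe that $\dist(z, s_z) = \dist(z, s_{\nn(z)})$ where $s_{\nn(z)}$ is $z$'s nearest segment, and I want to relate this to quantities anchored at $x$. The key geometric fact, already established inside the proof of Lemma~\ref{LFS-passport.lem}, is that the interior of the (unshrunken) capsule $\capsule(x, r)$ meets no segment except possibly $s_x$ (the nearest one to $x$), because $r = \dist(x, s_2)$ is exactly the distance to the \emph{second}-nearest segment. Hence every segment other than $s_x$ has distance at least $(1-\lambda)r$ from the point $z$ — this is precisely the computation of $\dist(z, \bd\,\capsule(x,r)) \ge (1-\lambda)r$ carried out there via the spine argument. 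In particular $\dist(z, s_z) \ge (1-\lambda)r$ unless $s_z = s_x$, and if $s_z = s_x$ the claimed inequality $\dist(z,s_x) \le \frac{1+\lambda}{1-\lambda}\dist(z,s_z)$ is trivial since $\frac{1+\lambda}{1-\lambda} \ge 1$. So assume $s_z \ne s_x$, giving $\dist(z, s_z) \ge (1-\lambda)r$.

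Next I would upper bound $\dist(z, s_x)$. Again from the internal computation of Lemma~\ref{LFS-passport.lem}: projecting $z$ onto the spine of $\capsule_{s_x}(x,r)$ to a point $z'$, one has $\dist(z,z') \le \lambda r$ and $\dist(z', s_x) = \dist(x, s_x) \le r$, whence $\dist(z, s_x) \le \dist(z,z') + \dist(z', s_x) \le (1+\lambda)r$. (If $s_x$'s capsule happens to be a ball, the same bound $\dist(z,s_x)\le \dist(x,s_x) + \dist(x,z) \le r + \lambda r$ holds directly.) Combining the two bounds,
\[
    \dist(z, s_x) ~\le~ (1+\lambda)r ~=~ \frac{1+\lambda}{1-\lambda}\,(1-\lambda)r ~\le~ \frac{1+\lambda}{1-\lambda}\,\dist(z, s_z),
\]
which is the assertion. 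Note I am using that $\capsule^\lambda(x,r) \subseteq \capsule^\lambda_{s_x}(x,r)$ and similarly for $s_2$, so $z$ lies in the per-segment shrunken capsules to which the spine estimates apply.

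The only real subtlety — and the step I would be most careful about — is the boundary/degenerate behavior: the case $s_z = s_x$ is handled separately as above, but one should also confirm the argument is insensitive to whether the nearest point on $s_x$ (or $s_2$) to the relevant point is an endpoint or an interior point, i.e. whether the capsule is a ball or a pseudo-cylinder. In both regimes the spine is defined (degenerating to the single point $x$ in the ball case) and the two inequalities $\dist(z,z')\le\lambda r$, $\dist(z',s)=\dist(x,s)$ (for $s_x$) resp.\ $\dist(z',s_2)=r$ continue to hold, so no case split beyond $s_z = s_x$ versus $s_z \ne s_x$ is actually needed. Everything else is a direct quotation of estimates from the proof of Lemma~\ref{LFS-passport.lem} plus one application of the triangle inequality.
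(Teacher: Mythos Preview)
Your proof is correct and follows essentially the same route as the paper's own proof: both dispose of the trivial case $s_z = s_x$, then invoke the estimates already established inside the proof of Lemma~\ref{LFS-passport.lem} to get $\dist(z,s_z)\ge(1-\lambda)r$ and $\dist(z,s_x)\le(1+\lambda)r$, and divide. Your extra care about the ball-versus-pseudo-cylinder distinction is fine but, as you yourself note, not strictly needed beyond what Lemma~\ref{LFS-passport.lem} already handles.
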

%-----------------------------------------------------------------------

%-----------------------------------------------------------------------
\begin{proof}
Assume $s_z$ is distinct from $s_x$, for the assertion holds trivially otherwise. In the same notation we used to prove Lemma~\ref{LFS-passport.lem}, let $r = \LFS(x)$, and denote by $s_x$ and $s'_x$ the nearest and second-nearest segments to $x$, respectively. Observing that $r = \dist(x, s'_x)$, the interior of $\capsule(x, r)$ cannot intersect any segment except for $s_x$. As seen in the proof of Lemma~\ref{LFS-passport.lem}, we have $\dist(z, s_z) \geq \dist(z, \bd \SP \capsule(x, r)) \geq (1-\lambda)r$ while $\dist(z, s_x) \leq (1+\lambda)r$.
\end{proof}
%-----------------------------------------------------------------------

As an immediate corollary, we have the following.

%-----------------------------------------------------------------------
\begin{corollary} \label{capsule-approx.cor}
For $0 < \eps \leq 1$ and any $z \in \capsule^\lambda(x, \LFS(x))$ with $0 < \lambda \leq \frac{\eps}{3}$, the nearest neighbor of $x$ is a $(1+\eps)$-approximate nearest neighbor of $z$.
\end{corollary}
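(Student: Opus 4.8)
The plan is to derive Corollary~\ref{capsule-approx.cor} directly from Lemma~\ref{capsule-approx.lem} by choosing $\lambda$ small enough that the factor $\frac{1+\lambda}{1-\lambda}$ is bounded by $1+\eps$. First I would fix $0 < \eps \le 1$ and a point $z \in \capsule^\lambda(x, \LFS(x))$ with $0 < \lambda \le \eps/3$, and let $s_x$ denote the segment nearest to $x$, which is the representative that Lemma~\ref{capsule-approx.lem} asserts to be a good surrogate. By that lemma, $\dist(z, s_x) \le \frac{1+\lambda}{1-\lambda}\,\dist(z, s_z)$, where $s_z$ is the segment nearest to $z$, so it suffices to show $\frac{1+\lambda}{1-\lambda} \le 1 + \eps$.

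Next I would carry out the elementary estimate on $\frac{1+\lambda}{1-\lambda}$. Write $\frac{1+\lambda}{1-\lambda} = 1 + \frac{2\lambda}{1-\lambda}$. Since $\lambda \le \eps/3 \le 1/3$, we have $1 - \lambda \ge 2/3$, hence $\frac{2\lambda}{1-\lambda} \le \frac{2\lambda}{2/3} = 3\lambda \le \eps$. Therefore $\frac{1+\lambda}{1-\lambda} \le 1 + \eps$, and combining with Lemma~\ref{capsule-approx.lem} gives $\dist(z, s_x) \le (1+\eps)\,\dist(z, s_z)$. By definition this means $s_x$, the nearest neighbor of $x$, is a $(1+\eps)$-approximate nearest neighbor of $z$, which is exactly the claim.

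Since the entire argument is a one-line substitution followed by a routine bound on a rational function of $\lambda$, there is essentially no obstacle; the only thing to be slightly careful about is the range of $\lambda$, namely that $\lambda \le \eps/3$ together with $\eps \le 1$ indeed forces $\lambda \le 1/3$ so that the denominator $1-\lambda$ stays safely bounded away from $0$ and the crude bound $\frac{2\lambda}{1-\lambda}\le 3\lambda$ is valid. One could also note that the hypothesis $0 < \lambda < 1$ required by Lemma~\ref{capsule-approx.lem} is automatically satisfied, so the lemma applies verbatim.
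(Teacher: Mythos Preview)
Your proof is correct and matches the paper's intent exactly: the paper states the result as an immediate corollary of Lemma~\ref{capsule-approx.lem} without giving a separate proof, and the one-line estimate $\frac{1+\lambda}{1-\lambda}=1+\frac{2\lambda}{1-\lambda}\le 1+3\lambda\le 1+\eps$ for $\lambda\le\eps/3\le 1/3$ is precisely what is needed.
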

%-----------------------------------------------------------------------

Capsules enjoy a number of properties (described later in this section) that make them suitable for forming hierarchical covers of space. However, as the intersection of $n$ cylinders and/or balls, capsules can have high combinatorial complexity. For this reason, we use their John ellipsoids in our data structure instead, as in related data structures based on Macbeath regions~\cite{AbM18,AFM18b}. For $x \in \RE^d$ and positive scalars $\lambda$ and $r$, define $E^\lambda(x, r)$ as the maximum volume ellipsoid enclosed within $\capsule^\lambda(x, r)$. By John's Theorem \cite{Joh48}, $E^\lambda(x, r) \subseteq \capsule^\lambda(x, r) \subseteq E^{\lambda\sqrt{d}}(x, r)$. Hence, up to constant factors, these \emph{capsule ellipsoids} can serve as low-complexity proxies for capsule. Our construction makes use of two particular constant scale factors independent of $\eps$, $0 < \lambda'' < \lambda' < 1$. For any $x$ and $r$, define $E''(x, r) = E^{\lambda''}(x, r)$ and $E'(x, r) = E^{\lambda'}(x, r)$. 

There is some flexibility in how these scale factors are chosen. A convenient choice is to select $\lambda' = \inv{2}$ (or any constant fraction bounded away from $1$) and select $\lambda''$ to be $\lambda'\sqrt{d}/\alpha$, where $\alpha$ is the scale factor from expansion-containment. This choice implies that any maximal packing of $E''$ ellipsoids can be converted to a cover by replacing each with its $E'$ ellipsoid. 

%=======================================================================
\subsection{Net-Like Properties for Capsules} \label{properties.sec}
%=======================================================================

In this section, present a number of properties of capsules, demonstrating that they possess similar properties to nets, which arise in the study of metric spaces~\cite{KrL04,KrL05,HaM06}. While we prove these results for capsules, they all hold for capsule ellipsoids, subject to an adjustment of constant factors. 

Our first result is a utility that relates two methods for growing capsules, first by expanding the distance parameter and second by applying a scale factor.

%-----------------------------------------------------------------------
\begin{lemma}\label{radius-scaling.lem}
For any $\gamma \geq 1$, $\capsule(x, \gamma r) \subseteq \capsule^\gamma(x, r)$, and for $0 < \lambda \leq 1$, $\capsule^\lambda(x, r) \subseteq \capsule(x, \lambda r)$.
\end{lemma}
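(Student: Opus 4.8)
The plan is to reduce the statement to a single segment and then to the two shapes a single-segment capsule can take. Since central scaling about $x$ is a bijection of $\RE^d$ and therefore commutes with intersection, and since $\capsule(x,\rho) = \bigcap_{s\in S}\capsule_s(x,\rho)$ for every $\rho$ by definition, both asserted inclusions follow once I establish them segment by segment, i.e.\ once I show $\capsule_s(x,\gamma r)\subseteq\capsule^\gamma_s(x,r)$ for $\gamma\ge 1$ and $\capsule^\lambda_s(x,r)\subseteq\capsule_s(x,\lambda r)$ for $0<\lambda\le 1$, and then intersect these over $s\in S$. The first thing I would record is that whether $\capsule_s(x,\rho)$ is a ball or a pseudo-cylinder is determined by whether the point of $s$ closest to $x$ is an endpoint or an interior point, hence by $x$ and $s$ alone; in particular this shape type does not change when the distance parameter $\rho$ is rescaled, so in each of the two per-segment inclusions the two capsules being compared have the same shape.

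The single underlying computation I would invoke throughout is the identity $c\cdot\max(\rho,t)=\max(c\rho,ct)$ together with the monotonicity $ct\ge t$ for $c\ge 1$ and $ct\le t$ for $c\le 1$. For the ball shape, $\capsule_s(x,\rho)$ is the ball about $x$ of radius $\max(\rho,\dist(x,s))$, so its central scaling by $c$ about $x$ is the ball of radius $\max(c\rho,c\,\dist(x,s))$, to be compared with $\capsule_s(x,c\rho)$, the ball of radius $\max(c\rho,\dist(x,s))$; taking $c=\gamma\ge 1$ makes the scaled ball the larger of the two and taking $c=\lambda\le 1$ makes it the smaller, which is exactly what is needed. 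For the pseudo-cylinder shape, $\capsule_s(x,\rho)$ is the intersection of the infinite cylinder of radius $\max(\rho,\dist(x,s))$ with axis through $x$ parallel to $s$ and the ball about $x$ of radius $\max(\rho,\min(\|x-a\|,\|x-b\|))$. A central scaling about $x$ keeps the cylinder's axis fixed, since it runs through the scaling center, so it sends this cylinder to the coaxial cylinder of radius $c\max(\rho,\dist(x,s))$ and the ball to the ball of radius $c\max(\rho,\min(\|x-a\|,\|x-b\|))$; applying the identity to each of the two radii and then intersecting the two defining regions yields the claimed inclusion for that segment in both the $\gamma\ge 1$ and $0<\lambda\le 1$ regimes.

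I do not expect a genuine obstacle here: the two points worth stating with a little care are that rescaling the distance parameter never changes a capsule's shape type, and that central scaling about $x$ sends a cylinder whose axis passes through $x$ to a coaxial cylinder; everything else is the one-line identity above followed by intersecting the per-segment inclusions over $s\in S$.
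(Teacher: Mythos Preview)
Your argument is correct and is essentially the same as the paper's: both reduce to the per-segment observation that every defining radius of $\capsule_s(x,\rho)$ has the form $\max(\rho,t)$ with $t$ depending only on $x$ and $s$, and then use $\max(\gamma r,t)\le\gamma\max(r,t)$ for $\gamma\ge1$ (and its $\lambda\le1$ counterpart). The paper compresses this into a one-line remark, whereas you spell out the intersection reduction, the shape-type invariance, and the cylinder/ball cases explicitly, but the substance is identical.
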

%-----------------------------------------------------------------------

%-----------------------------------------------------------------------
\begin{proof}
For each segment $s$, the radii used in $\capsule_{s}(x, r)$ are of the form $\max(r, \dist(x, s))$. Clearly, $\max(\gamma r, \dist(x, s)) \leq \gamma \cdot \max(r, \dist(x, s))$. The other inequality is similar.
\end{proof}
%-----------------------------------------------------------------------

The next lemma bounds the growth in the volume of capsules upon scaling. The first part follows directly from the fact that capsules are full dimensional and convex, and the second part follows from this in combination with Lemma~\ref{radius-scaling.lem}.

%-----------------------------------------------------------------------
\begin{lemma}\label{capsule-vol-bound.lem}
For any set of disjoint line segments $S\subseteq\RE^d$ and an arbitrary $x \in \RE^d$:
\begin{enumerate}
\vspace{5pt}\setlength{\itemsep}{2pt}%
\item[$(i)$] For $\lambda > 0$, $\vol(\capsule^\lambda(x, r)) ~ = ~ \lambda^d\cdot\vol(\capsule(x, r))$.

\item[$(ii)$] For $\beta \geq 1$, $\vol(\capsule(x, \beta r)) ~\leq~ \beta^d\cdot\vol(\capsule(x, r))$.
\end{enumerate}
\end{lemma}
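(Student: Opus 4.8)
The plan is to prove both parts in turn, using only elementary properties of capsules. For part $(i)$, I would use the fact that $\capsule^\lambda(x,r)$ is by definition the central scaling of $\capsule(x,r)$ about the point $x$ by the factor $\lambda$, combined with the observation that $\capsule(x,r)$ is a full-dimensional convex body (being a finite intersection of balls and infinite cylinders, each of which contains a neighborhood of $x$ since the distance parameter $\max(r,\dist(x,s))$ is strictly positive whenever the segments are disjoint and $x$ is not on a segment — and more simply, $x$ lies in the interior of every $\capsule_s(x,r)$ by construction). A central scaling by $\lambda$ in $\RE^d$ multiplies $d$-dimensional Lebesgue measure by exactly $\lambda^d$, so $\vol(\capsule^\lambda(x,r)) = \lambda^d \cdot \vol(\capsule(x,r))$. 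This is the easy half and requires essentially no work beyond citing the change-of-variables formula for the linear map $y \mapsto x + \lambda(y-x)$.

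For part $(ii)$, I would chain together Lemma~\ref{radius-scaling.lem} and part $(i)$. Specifically, for $\beta \geq 1$, Lemma~\ref{radius-scaling.lem} gives $\capsule(x,\beta r) \subseteq \capsule^\beta(x,r)$. Since volume is monotone under containment, $\vol(\capsule(x,\beta r)) \leq \vol(\capsule^\beta(x,r))$, and applying part $(i)$ with $\lambda = \beta$ yields $\vol(\capsule^\beta(x,r)) = \beta^d \cdot \vol(\capsule(x,r))$. Combining these gives $\vol(\capsule(x,\beta r)) \leq \beta^d \cdot \vol(\capsule(x,r))$, as claimed.

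I do not expect any genuine obstacle here; this lemma is a bookkeeping step. The only point requiring minor care is ensuring that $\capsule(x,r)$ is full-dimensional so that its volume is positive and the scaling argument is non-degenerate — but this follows immediately from the construction, since $x$ is an interior point of each $\capsule_s(x,r)$ (the cylinder and ball defining it each have positive radius and contain $x$), and a finite intersection of bodies each containing $x$ in its interior again contains $x$ in its interior. With that observation in hand, both parts are one-line consequences of the definitions and Lemma~\ref{radius-scaling.lem}.
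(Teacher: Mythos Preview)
Your proposal is correct and follows exactly the approach the paper indicates: part~$(i)$ is immediate from full-dimensionality and the scaling of Lebesgue measure, and part~$(ii)$ is obtained by combining Lemma~\ref{radius-scaling.lem} with part~$(i)$. There is nothing to add.
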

%-----------------------------------------------------------------------

Next, we derive a packing bound on the number of pairwise interior-disjoint capsules that may fit within a larger capsule.

%-----------------------------------------------------------------------
\begin{lemma}\label{capsule-deg-bound.lem}
Given a set of disjoint line segments $S\subseteq\RE^d$, $r \geq 0$, and two constant scale factors $0 < \lambda_p < \lambda_c < 1$, let $Y \subset \RE^d$ denote a set of points such that the associated capsules $\capsule^{\lambda_p}(y, r)$, with $y \in Y$, are disjoint. Then, for any $x \in \RE^d$ and $\beta \geq 1$, the number of capsules in $R_Y = \{\capsule^{\lambda_c}(y, r) \mid y \in Y\}$ that intersect $\capsule^{\lambda_c}(x, \beta r)$ is $O(\beta^d)$.
\end{lemma}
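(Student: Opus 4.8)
The plan is to run a standard volume-packing argument, using the expansion-containment lemma to reduce everything to a single common capsule, and Lemma~\ref{capsule-vol-bound.lem} to count. Fix $x$ and $\beta \geq 1$, and let $Z \subseteq Y$ be the set of centers $y$ such that $\capsule^{\lambda_c}(y,r)$ intersects $\capsule^{\lambda_c}(x,\beta r)$; our goal is to show $|Z| = O(\beta^d)$. First I would argue that every such $\capsule^{\lambda_c}(y,r)$ is contained in a bounded expansion of $\capsule^{\lambda_c}(x,\beta r)$. The cleanest route: by Lemma~\ref{radius-scaling.lem}, $\capsule^{\lambda_c}(x,\beta r) \subseteq \capsule^{\lambda_c}(x, \beta r)$ sits inside $\capsule(x, \lambda_c \beta r) \subseteq \capsule^{\beta}(x, \lambda_c r)$-type relations let me rescale so that, after replacing $r$ by $\beta r$ where convenient, the two families $\capsule^{\lambda_c}(y,r)$ and $\capsule^{\lambda_c}(x,\beta r)$ are comparable shrunken capsules at a common scale; then Lemma~\ref{capsule-exp-con.lem} (expansion-containment) applies. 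Concretely, since $\capsule^{\lambda_c}(y,r) \cap \capsule^{\lambda_c}(x,\beta r)$ is nonempty, and $\capsule^{\lambda_c}(y,r) \subseteq \capsule^{\lambda_c}(x, \beta r)$-scale shrunken capsules overlap, a constant-times-$\beta$ expansion of $\capsule^{\lambda_c}(x, \beta r)$ contains $\capsule^{\lambda_c}(y,r)$. Call the resulting containing region $Q = \capsule^{c_1}(x, \beta r)$ for an absolute constant $c_1$ depending only on $\lambda_c$.

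Next I would pass to the disjoint inner capsules. For each $y \in Z$, the capsule $\capsule^{\lambda_p}(y,r)$ is, by Lemma~\ref{capsule-vol-bound.lem}(i), a $(\lambda_p/\lambda_c)^d$ fraction of the volume of $\capsule^{\lambda_c}(y,r)$; and since $\capsule^{\lambda_p}(y,r) \subseteq \capsule^{\lambda_c}(y,r) \subseteq Q$, all the $\capsule^{\lambda_p}(y,r)$, $y \in Z$, are pairwise disjoint subsets of $Q$. Hence
\[
    \sum_{y \in Z} \vol\bigl(\capsule^{\lambda_p}(y,r)\bigr) ~\leq~ \vol(Q).
\]
The remaining task is to lower-bound each $\vol(\capsule^{\lambda_p}(y,r))$ relative to $\vol(Q)$. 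Here is where the containment must be used a second time, in the reverse direction: because $\capsule^{\lambda_c}(y,r)$ overlaps $\capsule^{\lambda_c}(x,\beta r)$, expansion-containment also gives $\capsule^{\lambda_c}(x, \beta r) \subseteq \capsule^{c_2}(y, r)$ for a constant $c_2$ (again using Lemma~\ref{radius-scaling.lem} to match scales — expanding the distance parameter of $y$'s capsule by $\beta$ costs a factor $\beta$, i.e.\ $\capsule(y, \beta r) \subseteq \capsule^\beta(y, r)$). Chaining, $Q = \capsule^{c_1}(x,\beta r) \subseteq \capsule^{c_3 \beta}(y, r)$ for an absolute constant $c_3$. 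Therefore, by Lemma~\ref{capsule-vol-bound.lem}(i),
\[
    \vol(Q) ~\leq~ \vol\bigl(\capsule^{c_3 \beta}(y, r)\bigr) ~=~ \left(\frac{c_3 \beta}{\lambda_p}\right)^{\! d} \vol\bigl(\capsule^{\lambda_p}(y, r)\bigr).
\]
Substituting into the volume inequality, $|Z| \cdot \vol(Q) / (c_3\beta/\lambda_p)^d \leq \vol(Q)$, so $|Z| \leq (c_3 \beta / \lambda_p)^d = O(\beta^d)$, as claimed.

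The main obstacle I anticipate is purely bookkeeping: the expansion-containment lemma (Lemma~\ref{capsule-exp-con.lem}) is stated for two capsules with the \emph{same} distance parameter $r$ and the \emph{same} scale $\lambda$, whereas here one capsule has parameter $\beta r$ and I also want to absorb it into $y$'s capsule with parameter $r$. The fix is exactly Lemma~\ref{radius-scaling.lem}, which converts a $\beta$-fold increase of the distance parameter into a $\beta$-fold central scaling (and back); I need to apply it carefully enough that the scale factors going into Lemma~\ref{capsule-exp-con.lem} stay in $(0,1)$ — which may require first shrinking by a constant before invoking expansion-containment and then re-expanding — and to keep track that all the accumulated constants depend only on $\lambda_p, \lambda_c$ and $d$, not on $\beta$. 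Everything else (the two volume bounds, the disjointness, the final division) is routine. Note also that if one prefers to work with capsule ellipsoids rather than capsules, the identical argument goes through after inflating the constants by factors of $\sqrt{d}$ coming from John's theorem, as remarked at the start of Section~\ref{properties.sec}.
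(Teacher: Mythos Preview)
Your proposal is correct and follows essentially the same approach as the paper: upgrade $y$'s capsule to the common distance parameter $\beta r$, apply expansion-containment (Lemma~\ref{capsule-exp-con.lem}) to relate the two capsules, and finish with a volume-packing argument via Lemma~\ref{capsule-vol-bound.lem}. The paper's writeup is marginally slicker in that it upgrades scales by the trivial monotonicity $\capsule^{\lambda_c}(y,r)\subseteq\capsule^{\lambda_c}(y,\beta r)$ rather than routing through Lemma~\ref{radius-scaling.lem}, but the argument is otherwise the same.
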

%-----------------------------------------------------------------------

%-----------------------------------------------------------------------
\begin{proof}
Fix a $y \in Y$ such that $\capsule^{\lambda_c}(x, \beta r) \cap \capsule^{\lambda_c}(y, r) \neq \emptyset$. As $\capsule^{\lambda_c}(y, r) \subseteq \capsule^{\lambda_c}(y, \beta r)$, we also have $\capsule^{\lambda_c}(x, \beta r) \cap \capsule^{\lambda_c}(y, \beta r) \neq \emptyset$. Applying Lemma~\ref{capsule-exp-con.lem} (with the roles of $x$ and $y$ swapped), we obtain $\capsule^{\lambda_c}(x, \beta r) \subseteq \capsule^{\alpha\lambda_c}(y, \beta r)$, where $\alpha = \frac{3+\lambda_c}{1-\lambda_c} > 1$. Lemma~\ref{capsule-vol-bound.lem} yields
\begin{align*}
    \vol(\capsule^{\lambda_p}(y, r))
        & ~ = ~ \left( \frac{\lambda_p}{\lambda_c} \right)^{\kern-2pt d} \vol(\capsule^{\lambda_c}(y, r)) 
          ~ \geq ~ \left( \frac{\lambda_p}{\lambda_c\beta} \right)^{\kern-2pt d} \vol(\capsule^{\lambda_c}(y, \beta r)) \\
        & ~ = ~ \left( \frac{\lambda_p}{\lambda_c\alpha\beta} \right)^{\kern-2pt d} \vol(\capsule^{\alpha\lambda_c}(y, \beta r))
          ~ \geq ~ \left( \frac{\lambda_p}{\lambda_c\alpha\beta} \right)^{\kern-2pt d} \vol(\capsule^{\lambda_c}(x, \beta r)).
\end{align*}
By packing, the number of capsules of $R_Y$ intersecting $\capsule^{\lambda_c}(x, \beta r)$ is $O\big(\big(\frac{\lambda_c\alpha\beta}{\lambda_p}\big)^d\big)$. The result follows since $\lambda_c$, $\lambda_p$, and $\alpha$ are all constants.
\end{proof}
%-----------------------------------------------------------------------

Turning our attention to radius assignment through the local feature size $\LFS$, we show that expansion-containment still holds.

%-----------------------------------------------------------------------
\begin{lemma} \label{exp-con-lfs.lem}
Given two points $x, y \in \RE^d$ and $0 < \lambda < 1$, if $\capsule^\lambda(x, \LFS(x)) \cap \capsule^\lambda(y, \LFS(y)) \neq \emptyset$, then $\capsule^\lambda(y, \LFS(y)) \subseteq \capsule^{\beta\lambda}(x, \LFS(x))$ for a constant $\beta = \frac{(3+\lambda)(1+\lambda)}{(1-\lambda)^2}$.
\end{lemma}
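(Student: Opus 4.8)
The plan is to reduce the $\LFS$-parametrized statement to the fixed-radius expansion-containment result of Lemma~\ref{capsule-exp-con.lem} by first forcing the two capsules to share a common distance parameter, and then accounting for the discrepancy between $\LFS(x)$ and $\LFS(y)$. First I would set $r_x = \LFS(x)$ and $r_y = \LFS(y)$ and, without loss of generality, assume $r_x \le r_y$ (the reverse case is symmetric up to swapping names, and the final constant $\beta$ is the larger of the two expansion factors). Since the shrunken capsules $\capsule^\lambda(x, r_x)$ and $\capsule^\lambda(y, r_y)$ overlap, there is a common point $z'$; I would use Lemma~\ref{LFS-passport.lem} applied at $x$ together with the Lipschitz property (Lemma~\ref{Lipschitz.lem}) to bound $r_y$ in terms of $r_x$. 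Concretely, $z' \in \capsule^\lambda(x, r_x)$ gives $\LFS(z') \le (1+\lambda) r_x$, and then by going from $z'$ to $y$ (both inside $\capsule^\lambda(y, r_y)$, so at distance $\le 2\lambda r_y$ from each other, hence $|\LFS(y) - \LFS(z')| \le 2\lambda r_y$) one gets $r_y \le (1+\lambda) r_x + 2\lambda r_y$, i.e. $r_y \le \frac{1+\lambda}{1-2\lambda} r_x$. (If one prefers to avoid the $1-2\lambda$ denominator, route the comparison through the diameter bounds of the individual capsules as in the proof of Lemma~\ref{capsule-exp-con.lem}, Case~(1.2); the key point is only that $r_y \le c(\lambda)\, r_x$ for an explicit constant.)

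Next I would homogenize the distance parameters using the monotonicity and radius-scaling utilities. Since $r_y \ge r_x$, Lemma~\ref{radius-scaling.lem} (or simply the fact that the defining radii $\max(r,\dist(x,s))$ are monotone in $r$) gives $\capsule^\lambda(y, r_y) \subseteq \capsule^\lambda(y, r_y)$ trivially, but more usefully I would pass to the common parameter $r_y$: write $\capsule^\lambda(x, r_x) \subseteq \capsule^\lambda(x, r_y)$, which holds because enlarging the distance parameter only enlarges each per-segment capsule. Now both capsules $\capsule^\lambda(x, r_y)$ and $\capsule^\lambda(y, r_y)$ use the same $r_y$ and still intersect (the intersection point $z'$ lies in the enlarged $\capsule^\lambda(x,r_y)$ as well). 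Apply Lemma~\ref{capsule-exp-con.lem} with distance parameter $r_y$: $\capsule^\lambda(y, r_y) \subseteq \capsule^{\alpha\lambda}(x, r_y)$ with $\alpha = \frac{3+\lambda}{1-\lambda}$.

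Finally I would convert $\capsule^{\alpha\lambda}(x, r_y)$ back to a capsule with distance parameter $r_x = \LFS(x)$. By Lemma~\ref{radius-scaling.lem}, $\capsule(x, r_y) \subseteq \capsule^{r_y/r_x}(x, r_x)$, and more generally $\capsule^{\alpha\lambda}(x, r_y) \subseteq \capsule^{\alpha\lambda \cdot (r_y/r_x)}(x, r_x)$ (central scalings compose). Substituting the bound $r_y/r_x \le \frac{1+\lambda}{1-\lambda}$ — using the cleaner Case~(1.2)-style estimate rather than the $1-2\lambda$ one, so that the constant matches the stated $\beta$ — yields
\[
    \capsule^\lambda(y, \LFS(y))
        ~ \subseteq ~ \capsule^{\alpha\lambda \cdot \frac{1+\lambda}{1-\lambda}}(x, \LFS(x))
        ~ = ~ \capsule^{\beta\lambda}(x, \LFS(x)),
    \qquad
    \beta ~ = ~ \frac{(3+\lambda)(1+\lambda)}{(1-\lambda)^2},
\]
as claimed. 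The main obstacle is getting the ratio $r_y/r_x$ bounded by exactly $\frac{1+\lambda}{1-\lambda}$ rather than a weaker constant: the naive argument through Lemma~\ref{LFS-passport.lem} gives $\frac{1+\lambda}{1-2\lambda}$, so I expect to need the triangle-inequality-on-the-spine argument from the proof of Lemma~\ref{capsule-exp-con.lem} (bounding $\dist(y,s_2)$ where $s_2$ is the second-nearest segment to $x$, via a common point of the shrunken capsules) to shave it down, exactly as in Case~(1.2) there; everything else is bookkeeping with the scaling lemmas.
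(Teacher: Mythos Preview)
Your approach is essentially the paper's: split on which of $\LFS(x),\LFS(y)$ is larger, homogenize the distance parameter, invoke Lemma~\ref{capsule-exp-con.lem}, and scale back via Lemma~\ref{radius-scaling.lem}. The only detour is in bounding $r_y/r_x$: instead of going through the Lipschitz lemma (which gives $\frac{1+\lambda}{1-2\lambda}$) and then trying to rerun Case~(1.2), just apply Lemma~\ref{LFS-passport.lem} a second time, now with $z' \in \capsule^\lambda(y,\LFS(y))$, to get $\LFS(z') \ge (1-\lambda)\LFS(y)$; combined with $\LFS(z') \le (1+\lambda)\LFS(x)$ this yields $r_y/r_x \le \frac{1+\lambda}{1-\lambda}$ in one line, which is exactly what the paper does.
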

%-----------------------------------------------------------------------

%-----------------------------------------------------------------------
\begin{proof}
Assuming first that $\LFS(y) \leq \LFS(x)$, $\capsule^\lambda(y, \LFS(y)) \subseteq \capsule^\lambda(y, \LFS(x))$ and thus
\[
    \capsule^\lambda(x, \LFS(x)) \cap \capsule^\lambda(y, \LFS(y)) \neq \emptyset 
        ~ \implies ~ \capsule^\lambda(x, \LFS(x)) \cap \capsule^\lambda(y, \LFS(x)) \neq \emptyset.
\]
Applying Lemma~\ref{capsule-exp-con.lem} with $r = \LFS(x)$, implies that $\capsule^\lambda(y, \LFS(y)) \subseteq \capsule^{\alpha\lambda}(x, \LFS(x))$.

Otherwise, $\LFS(x) \leq \LFS(y)$, and by applying Lemma~\ref{LFS-passport.lem} twice with any $z \in \capsule^\lambda(x, \LFS(x)) \cap \capsule^\lambda(y, \LFS(y))$, we obtain
\[
    \LFS(y) 
        ~ \leq ~ \frac{1}{1-\lambda}\LFS(z) 
        ~ \leq ~ \frac{1+\lambda}{1-\lambda}\LFS(x).
\]
By Lemma~\ref{radius-scaling.lem}, $\capsule^\lambda(x, \gamma \SP \LFS(x)) \subseteq \capsule^{\gamma\lambda}(x, \LFS(x))$, where $\gamma = \frac{1+\lambda}{1-\lambda} > 1$. Therefore
\[
    \capsule^\lambda(x, \LFS(x)) \cap \capsule^\lambda(y, \LFS(y)) \neq \emptyset 
        ~ \implies ~ \capsule^\lambda(x, \gamma\cdot\LFS(x)) \cap \capsule^\lambda(y, \gamma\cdot\LFS(x)) \neq \emptyset.
\]
Applying Lemma~\ref{capsule-exp-con.lem} with $r = \gamma \SP \LFS(x)$, implies that $\capsule^\lambda(y, \LFS(y)) \subseteq \capsule^{\alpha\gamma\lambda}(x, \LFS(x))$.
\end{proof}
%-----------------------------------------------------------------------

%=======================================================================
\section{The ANN Data Structure} \label{data-struct.sec}
%=======================================================================

In this section, we apply the results of the previous section to present our data structure for answering $\eps$-ANN queries. Again, $S = \{s_1, \ldots, s_n\}$ is a set of $n$ disjoint line segments in $\RE^d$, and $\eps > 0$ is the approximation parameter. Let $B(S)$ be a minimum volume Euclidean ball that contains $S$, and let $B^+(S)$ denote a concentric expansion of $B(S)$ about its center by a factor of $1 + 2/\eps$. It is easy to see that if the query point $q$ lies outside of $B^+(S)$, any segment may be reported as an $\eps$-ANN of $q$. Thus, for the rest of the construction, we focus on query points lying within $B^+(S)$. Let $x_0$ and $r^+$ denote the center and radius of this ball, respectively. Clearly, $r^+ = \Theta(\diam(S)/\eps)$. Let $\delta(S)$ denote the minimum distance between any two segments of $S$. Observe that for every point $x \in B^+(S)$, its local feature size, $\LFS(x)$, is at least $\delta(S)/2$.

Here is a high-level overview of the data structure. It consists of a rooted directed acyclic graph (DAG), which is based on covering $B^+(S)$ with a hierarchy of capsule ellipsoids of exponentially diminishing scales. The DAG is organized in levels, with a single root node at level zero whose associated capsule ellipsoid contains $B^+(S)$. For $i \geq 0$, the capsule ellipsoids associated with the nodes of level $i$ employ the distance parameter $r_i = r^+ / 2^i$, and thus successive levels are more refined. Each level of the DAG will be associated with a collection of capsule ellipsoids that cover $B^+(S)$. In particular, each node at level $i$ of the DAG stores a point $x \in B^+(S)$, and the associated capsule ellipsoid, denoted $E'(x)$, centered at this point is defined to be the shrunken ellipsoid $E'(x, r_i)$ with respect to $S$. Each node at level $i$ will either be declared to be a leaf, or it will be linked to those nodes at level $i+1$ whose capsule ellipsoids it overlaps, which we call its \emph{children}. (We will show that the out-degree of any node is a constant.) 

We continue this refinement process until $r_i \leq \LFS(x)$. The resulting terminal nodes are called \emph{basic leaves}. We then further refine these ellipsoids through scaling. In particular, for $j = 1, 2, \ldots$, we cover $E'(x, r_i)$ by ellipsoids of the form ${E'}^{1/2^{j}}(y, r_i)$ for $y \in E'(x, r_i)$. We stop when $1/2^j \leq \frac{\eps(1-\lambda')}{3 \lambda'}$, where $\lambda'$ is the constant scale factor used in the definition the $E'$ ellipsoids. Each of the resulting ellipsoids, called a \emph{final leaf}, stores the segment of $S$ that is closest to its center as its \emph{representative}.

Queries are answered by a simple descent through this DAG. Assuming that the query point $q$ lies within $B^+(S)$, we descend level by level through the DAG. On arriving at a non-leaf node, we inspect the ellipsoids of its children on level $i+1$. Their associated capsules cover $E'(x)$, and the search continues with any one of these children whose associated ellipsoid contains $q$. When the search arrives at a final leaf, the associated representative segment is returned as the answer to the query.

The DAG is constructed in a top-down manner, starting with the root of the DAG. The root capsule is $E'(x_0)$, where $x_0$ is the center of $B^+(S)$. We assert that this covers $B^+(S)$. To see this, observe that by definition, $E'(x,r)$ contains the ball of radius $r$ centered at $x$, and hence the same holds for $E'(x,r)$. For $i = 1, 2, \ldots$, let $U_i$ denote the portion of $B^+(S)$ that is not covered by any of the leaves of the structure from prior levels. For any $x \in \RE^d$, define $E''(x) = E''(x, r_i)$. Let $X_i$ be any maximal set of points $x$ within $U_i$ such that the associated ellipsoids $E''(x)$ are pairwise disjoint. It follows from maximality and expansion-containment that the union of the expanded ellipsoids $E'(x)$ for $x \in X_i$ covers $U_i$. We create a node at level $i$ of the DAG for each point $x \in X_i$, and we link each such node as a child of any non-leaf node from the previous level whose $E'$ capsule ellipsoid (computed with respect to level $i-1$) it overlaps.

In the remainder of this section, we analyze the correctness, query time and storage requirements of this data structure. Our first two lemmas establish correctness and bound the depth of the data structure. The following shows that queries are answered correctly.

%-----------------------------------------------------------------------
\begin{lemma} \label{ann-correctness.lem}
Given a set $S$ of line segments, the above search algorithm returns an $\eps$-ANN among the segments of $S$ for any query point $q \in B^+(S)$.
\end{lemma}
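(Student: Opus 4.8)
The plan is to verify that the descent through the DAG always terminates at a final leaf whose representative is a valid $\eps$-ANN, and to do so by combining the covering property of each level with the approximation guarantee already established in Corollary~\ref{capsule-approx.cor}. First I would argue that the descent is well-defined: the root ellipsoid $E'(x_0)$ contains $B^+(S)$ by construction, so $q$ lies in the root's ellipsoid. Inductively, whenever $q$ lies in the $E'$ ellipsoid of a non-leaf node $v$ at level $i$ with center $x$, the children of $v$ are exactly those level-$(i+1)$ nodes whose $E'$ ellipsoids overlap $E'(x)$; since the union of the $E'$ ellipsoids of $X_{i+1}$ covers $U_{i+1}$ (by maximality of the $E''$-packing together with the expansion-containment property, Lemma~\ref{capsule-exp-con.lem}), and since $q$ is not yet covered by any leaf from an earlier level, $q$ lies in $U_{i+1}$ and hence in the $E'$ ellipsoid of some child. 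So the search can always proceed, and by the depth bound (the companion lemma of this one) it reaches a basic leaf after finitely many steps, then continues through the $O(\log(1/\eps))$ scaling sublevels within that basic leaf until it reaches a final leaf.

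Next I would pin down the geometric relationship between $q$ and the center of the final leaf it lands in. Let the final leaf be an ellipsoid of the form ${E'}^{1/2^j}(y, r_i)$ with $q \in {E'}^{1/2^j}(y, r_i)$, where $r_i \leq \LFS(x)$ for the basic-leaf center $x$, and $1/2^j \leq \tfrac{\eps(1-\lambda')}{3\lambda'}$. Using $E^{\lambda}(x,r) \subseteq \capsule^{\lambda}(x,r)$ (John's theorem) and Lemma~\ref{radius-scaling.lem}, the containment $q \in {E'}^{1/2^j}(y,r_i)$ puts $q$ inside a shrunken capsule $\capsule^{\mu}(y, r_i)$ with scale $\mu = \lambda'/2^j$. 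Since within the basic leaf we have refined down to $r_i \le \LFS(x)$, and since by Lemma~\ref{LFS-passport.lem} all centers arising in the refinement of this basic leaf have local feature size within a constant factor of $\LFS(x)$, the parameter $r_i$ is at most $\LFS(y)$ (up to adjusting the stopping condition, which is exactly why the bound $1/2^j \le \eps(1-\lambda')/(3\lambda')$ is chosen with the $\lambda'$ factor). Thus $q \in \capsule^{\mu}(y, \LFS(y))$ with $\mu \le \eps/3$, and Corollary~\ref{capsule-approx.cor} immediately gives that the nearest neighbor $s_y$ of $y$ — which is precisely the representative stored at the final leaf — is a $(1+\eps)$-ANN of $q$.

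The main obstacle I anticipate is the bookkeeping around the distance parameter inside a basic leaf: the refinement there is by central \emph{scaling} of a single capsule ellipsoid $E'(x, r_i)$ rather than by reducing $r_i$, so one must carefully relate ${E'}^{1/2^j}(y, r_i)$ to $\capsule^{\mu}(y, \LFS(y))$ and confirm that the chosen stopping threshold for $j$ really yields an effective scale factor at most $\eps/3$ relative to the local feature size at $y$ (not just at $x$). This requires invoking Lemma~\ref{LFS-passport.lem} to transfer the LFS bound from $x$ to $y$ and then checking that the constants $\lambda'$, $(1-\lambda')$, and the factor $3$ in Corollary~\ref{capsule-approx.cor} compose correctly; the algebra is routine but the place where an off-by-a-constant error would hide. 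Everything else — the covering of each level, the descent being well-defined, and the final application of the corollary — follows directly from the lemmas already proved.
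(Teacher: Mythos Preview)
Your proposal is correct and follows essentially the same approach as the paper's proof: both arguments place the query point inside a shrunken capsule $\capsule^{\mu}(y,\LFS(y))$ with $\mu\le \eps/3$ by combining the stopping threshold $1/2^j \le \eps(1-\lambda')/(3\lambda')$, the containment $E'\subseteq\capsule^{\lambda'}$, Lemma~\ref{LFS-passport.lem} to pass from $\LFS(x)$ to $\LFS(y)$, and Lemma~\ref{radius-scaling.lem} to absorb the $(1-\lambda')$ factor, before invoking Corollary~\ref{capsule-approx.cor}. Your additional paragraph on why the descent is well-defined is not in the paper's proof but is a reasonable completeness check, and you have correctly identified the one delicate point (transferring the radius bound from $x$ to $y$) and the reason the constants are set up the way they are.
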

%-----------------------------------------------------------------------

%-----------------------------------------------------------------------
\begin{proof}
Without loss of generality, we may assume that $\eps \leq 1$. Recall that a node centered at a point $x$ is declared to be a basic leaf when the associated distance parameter $r_i$ is less than $\LFS(x)$, and consider any final leaf ${E'}^{1/2^{j}}(y, r_i)$, for $y \in E'(x, r_i)$. By construction, $1/2^j \leq \frac{\eps(1-\lambda')}{3 \lambda'}$, where $\lambda'$ is the scale factor used in the definition of $E'$. Therefore,
\[
    {E'}^{1/2^{j}}(y, r_i)
        ~ \subseteq ~ {E'}^{\eps(1-\lambda')/3 \lambda'}(y, \LFS(x))
        ~ =         ~ \capsule^{(1-\lambda') \eps/3}(y, \LFS(x)).
\]
Since $y \in E'(x, r_i) \subseteq \capsule^{\lambda'}(x, \LFS(x))$, Lemma~\ref{LFS-passport.lem} implies that $\LFS(y) \geq (1-\lambda') \LFS(x)$. By applying Lemma~\ref{radius-scaling.lem}, we obtain $\capsule^{1-\lambda'}(y, \LFS(x)) \subseteq \capsule(y, (1-\lambda')\LFS(x))$. Given our definition of final leaf, we have
\[
    \capsule^{(1-\lambda') \eps/3}(y, \LFS(x)) 
        ~ \subseteq ~ \capsule^{\eps/3}(y, (1-\lambda') \LFS(x)) 
        ~ \subseteq ~ \capsule^{\eps/3}(y, \LFS(y)).
\]
Finally, Corollary~\ref{capsule-approx.cor} shows that the nearest neighbor of $y$ is an $\eps$-ANN to any point within  $\capsule^{\eps/3}(y, \LFS(y))$, implying that queries are answered correctly.
\end{proof}
%-----------------------------------------------------------------------

Next we analyze the depth of the DAG.

%-----------------------------------------------------------------------
\begin{lemma} \label{root-level.lem}
Given a set $S$ of $n$ disjoint segments in $\RE^d$ with spread $\Delta$, the $\eps$-AVD structure described above has $O(\log (\max(n,\Delta) / \eps))$ levels. 
\end{lemma}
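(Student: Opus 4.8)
The plan is to bound the number of levels by analyzing two phases separately: the ``basic'' phase, where we refine from the root capsule down to capsules whose distance parameter $r_i$ first drops below the local feature size at their centers, and the ``final'' phase of scaling refinement at each basic leaf. For the basic phase, recall that the distance parameter halves at each level, $r_i = r^+/2^i$, with $r^+ = \Theta(\diam(S)/\eps)$. The refinement at a node centered at $x$ terminates once $r_i \leq \LFS(x)$, and since every point $x \in B^+(S)$ satisfies $\LFS(x) \geq \delta(S)/2$, the basic phase certainly terminates once $r_i \leq \delta(S)/2$. Thus the number of basic levels is at most $\log_2(r^+/(\delta(S)/2)) = O(\log(\diam(S)/(\eps\,\delta(S)))) = O(\log(\Delta/\eps))$, using $\Delta = \diam(S)/\delta(S)$.

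For the final phase, a basic leaf centered at $x$ has distance parameter $r_i$ with $r_i \leq \LFS(x)$ (and, since the previous level was not a basic leaf, also $r_i > \LFS(x)/2$, so $r_i = \Theta(\LFS(x))$). We refine by scaling: for $j = 1, 2, \ldots$, cover by ellipsoids of scale $1/2^j$, stopping once $1/2^j \leq \eps(1-\lambda')/(3\lambda')$. Since $\lambda'$ is a constant bounded away from $1$, this requires $j = O(\log(1/\eps))$ scaling steps. Hence each basic leaf contributes only $O(\log(1/\eps))$ additional levels below it.

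Combining, the total depth is $O(\log(\Delta/\eps)) + O(\log(1/\eps)) = O(\log(\Delta/\eps))$. Finally, to match the statement's $O(\log(\max(n,\Delta)/\eps))$ form, note that this is implied by $O(\log(\Delta/\eps))$ (indeed the remark following Theorem~\ref{ann-main.thm} observes that $n = O(\Delta^d)$ for constant $d$, so $\log(\max(n,\Delta)/\eps) = \Theta(\log(\Delta/\eps))$); stating the bound with $\max(n,\Delta)$ simply avoids invoking this relationship. I expect the only mild subtlety to be pinning down $r^+ = \Theta(\diam(S)/\eps)$ and the lower bound $\LFS(x) \geq \delta(S)/2$ on $B^+(S)$ cleanly — but both are already established in the text preceding the lemma, so the argument is essentially a direct counting of halvings in each of the two phases.
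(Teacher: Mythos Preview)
Your proposal is correct and follows essentially the same approach as the paper: split into the basic phase (counting halvings of $r_i$ from $r^+=\Theta(\diam(S)/\eps)$ down to the $\LFS$ lower bound $\Theta(\delta_{\min})$, giving $O(\log(\Delta/\eps))$ levels) and the final scaling phase ($O(\log(1/\eps))$ levels), then absorb the $n$ dependence using $n=O(\Delta^d)$. The paper's proof is nearly word-for-word the same argument.
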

%-----------------------------------------------------------------------

%-----------------------------------------------------------------------
\begin{proof}
Recall that $B(S)$ is the minimum Euclidean ball containing $S$ and $B^+(S)$ is its expansion by $1 + \frac{2}{\eps}$, and $r^+$ is its radius. Clearly, $r^+ = \Theta(\diam(S)/\eps)$. Letting $\delta_{\min}$ denote the minimum distance between any pair of segments, $\Delta = \diam(S)/\delta_{\min}$. Clearly, for any point $x$, $\LFS(x) \geq \delta_{\min}$. The refinement process terminates at the basic-leaf level when the scale falls below $\LFS(x) \geq \delta_{\min}$. It then continues for an additional $O(\log \inv{\eps})$ levels until reaching the final-leaf level. Since the scale decreases by a factor of $2$ with each level of the data structure, the total number of levels is 
\[
    O\left( \log \frac{r^+}{\delta_{\min}} + \log \inv{\eps} \right) 
        ~ = ~ O\left( \log \frac{\diam(S)}{\eps \SP \delta_{\min}}  + \log \inv{\eps} \right) 
        ~ = ~ O\left( \log \frac{\Delta}{\eps} \right),
\]
as desired. Recall that the spread of a set of segments in $\RE^d$ grows at least polynomially with $n$, therefore $\log n$ is $O(\log \Delta)$.
\end{proof}
%-----------------------------------------------------------------------

Our next result bounds the number of children for each node.

%-----------------------------------------------------------------------
\begin{lemma} \label{ann-child-bound.lem}
Each non-leaf node of the data structure has $O(1)$ children.
\end{lemma}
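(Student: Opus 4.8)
The plan is to bound the number of children of a non-leaf node by a volume/packing argument, leveraging the expansion-containment machinery built in the previous section. Fix a non-leaf node at level $i$ centered at $x$, with associated capsule ellipsoid $E'(x) = E^{\lambda'}(x, r_i)$; its children are the nodes at level $i+1$, centered at points $y \in X_{i+1}$, whose ellipsoids $E'(y) = E^{\lambda'}(y, r_{i+1})$ overlap $E'(x)$. Recall $r_{i+1} = r_i/2$. Since the $E''(y) = E^{\lambda''}(y, r_{i+1})$ ellipsoids over $X_{i+1}$ are pairwise disjoint by construction, and since John's theorem relates capsule ellipsoids to capsules up to the constant factor $\sqrt{d}$, it suffices (after adjusting constants) to count how many disjoint shrunken capsules $\capsule^{\lambda_p}(y, r_{i+1})$ can have their $\capsule^{\lambda_c}(y, r_{i+1})$ expansions meet $\capsule^{\lambda_c}(x, r_i)$, for suitable constants $0 < \lambda_p < \lambda_c < 1$ absorbing the John's-theorem slack.

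The key steps, in order, are: first, translate the ellipsoid statement into a capsule statement by invoking $E^{\lambda}(x,r) \subseteq \capsule^\lambda(x,r) \subseteq E^{\lambda\sqrt d}(x,r)$, so that disjointness of the $E''$ ellipsoids gives disjointness of slightly smaller capsules, and overlap of the $E'$ ellipsoids gives overlap of slightly larger capsules. Second, rewrite everything in terms of a single distance parameter: by Lemma~\ref{radius-scaling.lem}, $\capsule^{\lambda_c}(x, r_i) = \capsule^{\lambda_c}(x, 2 r_{i+1}) \subseteq \capsule^{2\lambda_c}(x, r_{i+1})$, so the parent's capsule is, up to a constant scale factor, a capsule with distance parameter $r_{i+1}$, i.e.\ it is contained in $\capsule^{\lambda_c}(x, \beta r_{i+1})$ for the constant $\beta = 2$ (after possibly renaming $\lambda_c$). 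Third, apply Lemma~\ref{capsule-deg-bound.lem} directly with this $\beta = O(1)$: it asserts that the number of $\capsule^{\lambda_c}(y, r_{i+1})$ meeting $\capsule^{\lambda_c}(x, \beta r_{i+1})$, given the $\capsule^{\lambda_p}(y, r_{i+1})$ are disjoint, is $O(\beta^d) = O(1)$ since $\beta$, $\lambda_c$, $\lambda_p$ are all constants independent of $\eps$ and $n$. Fourth, handle the final-leaf refinement layers separately but identically: within a basic leaf, each refinement step also only shrinks by a factor of $2$, so the same packing bound (this time with fixed $r = r_i$ and varying scale factors, again all constants relative to each other) shows each node spawns $O(1)$ children at the next scaling sublevel.

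The main obstacle is bookkeeping the constants so that the chain of inclusions $E'' \subseteq \capsule \subseteq E'$ together with the doubling of the distance parameter leaves us with a genuine instance of Lemma~\ref{capsule-deg-bound.lem} whose hypotheses ($0 < \lambda_p < \lambda_c < 1$) are actually satisfied — in particular one must check that the John's-theorem expansion by $\sqrt d$ does not push any scale factor past $1$, which is exactly why the paper chose $\lambda'' = \lambda' \sqrt d / \alpha$ with $\lambda' = 1/2$ and $\alpha > 1$. Once that choice is in hand the argument is essentially a citation of Lemma~\ref{capsule-deg-bound.lem} with $\beta$ a small absolute constant. A secondary, purely notational, point is that the children live at level $i+1$ whose capsules are defined with $r_{i+1}$, while the parent's overlap test is phrased with its own level-$i$ ellipsoid; making the distance parameters uniform via Lemma~\ref{radius-scaling.lem} before packing is the step one must not skip. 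I would write the proof as: reduce to capsules, unify the distance parameter, cite Lemma~\ref{capsule-deg-bound.lem}, note the final-leaf layers are analogous, and conclude $O(1)$.
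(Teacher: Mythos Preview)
Your proposal is correct and follows essentially the same approach as the paper: set $r = r_{i+1} = r_i/2$, note the $E''$ ellipsoids at level $i+1$ are disjoint by construction, and invoke Lemma~\ref{capsule-deg-bound.lem} (in the elliptical setting, with $\beta = 2$) to get $O(2^d) = O(1)$. You spell out more of the bookkeeping---the John's-theorem translation to capsules, the use of Lemma~\ref{radius-scaling.lem} to unify the distance parameter, and the separate handling of the scaling sublevels below a basic leaf---whereas the paper's proof is a terse three sentences that simply cites Lemma~\ref{capsule-deg-bound.lem} ``in the elliptical setting'' and does not explicitly treat the final-leaf layers.
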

%-----------------------------------------------------------------------

%-----------------------------------------------------------------------
\begin{proof}
Consider a node at some level $i$ centered at a point $x$. Let $r = r_{i+1} = r_i/2$. This node's children consist of the nodes $y$ of level $i+1$ whose ellipsoid $E'(y, r_{i+1})$ overlaps $E'(x,r_i)$. By construction, all such points $y$ come from a set $Y$ whose ellipsoids $E''(y, r_{i+1})$ are disjoint. By applying Lemma~\ref{capsule-deg-bound.lem} in the elliptical setting, the number of such overlapping ellipsoids is $O(2^d) = O(1)$, given our assumption that the dimension $d$ is fixed.
\end{proof}
%-----------------------------------------------------------------------

Since each node of the DAG has constant degree, it follows that the overall query time is proportional to DAG's height, which is $O(\log (\Delta / \eps))$. Finally, we bound the total space used by the data structure.

%-----------------------------------------------------------------------
\begin{lemma} \label{ann-space-bound.lem}
Given a set $S$ of $n$ line segments in $\RE^d$, the total storage required by the $\eps$-AVD is $O((n^2 / \eps^d) \log\frac{\Delta}{\eps})$.
\end{lemma}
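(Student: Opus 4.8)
The plan is to charge the storage level by level and sum over the $O(\log(\Delta/\eps))$ levels supplied by Lemma~\ref{root-level.lem}. The total storage is proportional to the number of nodes in the DAG (each node stores a constant-complexity ellipsoid, a pointer to a representative segment, and $O(1)$ child pointers by Lemma~\ref{ann-child-bound.lem}), so it suffices to bound the number of nodes at each level. I would treat the ``upper'' levels (down to the basic-leaf level) and the ``lower'' levels (the $O(\log(1/\eps))$ refinement levels producing final leaves) separately.

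\textbf{Counting nodes at an upper level.} Fix a level $i$ with distance parameter $r_i$. The points $X_i$ defining the nodes at this level have pairwise-disjoint $E''$ ellipsoids, all contained in $B^+(S)$ (in fact in the uncovered region $U_i$). The main idea is a volume/packing argument localized to each segment. For a single segment $s$, the capsule $\capsule_s(x, r_i)$ — and hence its John ellipsoid, up to the constant factor $\sqrt{d}$ from John's theorem and the constant $\lambda''$ — has volume $\Omega(r_i^d)$ when $x$ is ``near'' $s$ (radius $r_i$), but when $x$ is ``far'' the capsule is a pseudo-cylinder of horizontal radius $r_i$ and length $\Theta(\dist(x,s))$, so its volume is only $\Omega(r_i^{d-1}\cdot \dist(x,s))$. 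The key observation is that for the specific level at which a node becomes a basic leaf we have $r_i \le \LFS(x) = \dist(x,s_x')$ (second-nearest), which does not directly control the volume; but as long as the node is a \emph{non-leaf}, $r_i > \LFS(x)$, and then at least two segments are within distance $r_i$ of $x$, so at least two of the per-segment capsules are \emph{balls} of radius $r_i$, forcing $\vol(\capsule(x,r_i)) = \Omega(r_i^d)$. Since the $E''$ ellipsoids of $X_i$ are disjoint and each has volume $\Omega(r_i^d)$, and they all lie in $B^+(S)$ of volume $O((r^+)^d)$, the number of non-leaf nodes at level $i$ is $O((r^+/r_i)^d) = O(2^{id})$. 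This is too weak to sum; the refinement is to observe that a non-leaf node at level $i$ is only created in a region not yet covered, and to charge it instead against the \emph{pair} of segments realizing its LFS: localize the packing to a ball of radius $O(r_i)$ around the relevant segment endpoints. Summing the local packing bounds over all $\binom{n}{2}$ pairs — noting $r_i \ge \delta_{\min}/2$ so $r^+/r_i \le \Delta/\eps$ only needs to be invoked globally — gives $O(n^2)$ nodes per upper level after the bound has stabilized, and $O(n^2/\eps^d)$ total for the ranges where the spread factor is active. I expect the honest bookkeeping here — reconciling the per-segment-pair local packing with the global $B^+(S)$ bound so that the two $\Delta/\eps$-type factors do not multiply — to be the main obstacle.

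\textbf{Counting final leaves.} At a basic leaf centered at $x$ with parameter $r_i \le \LFS(x)$, we refine $E'(x,r_i)$ by ellipsoids ${E'}^{1/2^j}(y,r_i)$ down to $1/2^j = \Theta(\eps)$. Each refinement level subdivides into $O((1/\eps)^d)$ pieces by a packing argument inside the (convex, full-dimensional) capsule, using Lemma~\ref{capsule-vol-bound.lem}(i) to see that scaling by $1/2^j$ shrinks volume by $2^{-jd}$, together with expansion-containment (Lemma~\ref{capsule-exp-con.lem}) to turn a maximal $E''$-packing into a cover. Since the refinement levels form a geometric progression, the total number of final leaves descending from one basic leaf is dominated by the last level, $O((1/\eps)^d)$. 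Thus final-leaf storage is $O((1/\eps)^d)$ times the number of basic leaves; the number of basic leaves is $O(n^2)$ by the same per-segment-pair packing argument as above (now $r_i \le \LFS(x)$ is exactly the regime where two balls of radius $\ge r_i$ may fail, but $\LFS(x) = \Theta(r_i)$ at the basic-leaf level, so the relevant capsule still has volume $\Omega(r_i^d)$ localized near the nearest two segments). This yields $O(n^2/\eps^d)$ final leaves.

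\textbf{Assembling the bound.} Summing: $O(\log(\Delta/\eps))$ upper levels each contributing $O(n^2/\eps^d)$ nodes — here the $1/\eps^d$ is generous but absorbed — plus $O(n^2/\eps^d)$ final leaves, gives total storage $O((n^2/\eps^d)\log(\Delta/\eps))$, as claimed. Per-node storage is $O(1)$ since dimension is constant and out-degree is $O(1)$ by Lemma~\ref{ann-child-bound.lem}. The one place requiring care beyond routine packing is ensuring that the ``two segments within $r_i$'' argument applies at every non-leaf node (immediate from the basic-leaf stopping rule $r_i > \LFS(x)$) and that the localization to segment pairs is geometrically valid — a far capsule for segment $s$ has bounded length $O(r_i + \dist(x,s))$, and a non-leaf node has $\dist(x, s_x), \dist(x, s_x') \le r_i$, so its center lies within $O(r_i)$ of both nearest segments, confining the packing to a region of volume $O(r_i^d)$ per pair.
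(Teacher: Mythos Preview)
Your high-level skeleton (charge nodes to the pair of segments realizing the local feature size, then pack) is the paper's plan as well, but the localization step you lean on is false as stated, and that is exactly the place where the real work lies.

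You write that a non-leaf node has $\dist(x,s_x),\dist(x,s_x')\le r_i$, ``confining the packing to a region of volume $O(r_i^d)$ per pair.'' This is not true. The set of points within distance $r_i$ of two segments is the intersection of two solid cylinders of radius $r_i$; if the segments make an acute angle $\theta$, this intersection has volume $\Theta(r_i^d\,\csc\theta)$, which is unbounded as $\theta\to 0$. So you cannot simply pack $\Omega(r_i^d)$-volume pieces into an $O(r_i^d)$-volume region. (Relatedly, your earlier remark that ``at least two of the per-segment capsules are balls of radius $r_i$'' is also off: being within $r_i$ of a segment's interior makes the per-segment capsule a \emph{pseudo-cylinder} whose thin dimension is $r_i$, not a ball. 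The lower bound $\vol(\capsule(x,r_i))\ge\Omega(r_i^d)$ does hold, but for the trivial reason that every $\capsule_s(x,r)$ contains the $r$-ball about $x$; this alone cannot beat the $\csc\theta$ blow-up.)

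The paper's Lemma~\ref{charging.lem} is precisely the missing ingredient. It sandwiches $\capsule(x,r)$ between a double cone $V^-(x,r)$ and a short cylinder $V^+(x,r)$ aligned with the nearest segment, both of volume $\Theta(r^d\,\csc\theta)$; the same $\csc\theta$ factor appears in the volume of the cylindrical shell $\mathcal{I}_k(s_i,r)$ containing the centers, so the two cancel and leave $O(1)$ capsules per shell. Summing over the $O(\log(\Delta/\eps))$ shells/scales gives $O(\log(\Delta/\eps))$ basic-leaf capsules per segment pair, hence $O(n^2\log(\Delta/\eps))$ basic leaves, and then your $O(1/\eps^d)$-per-basic-leaf refinement count (which is fine) finishes. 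Your accounting at the end---$O(n^2/\eps^d)$ per upper level, then summed over $O(\log(\Delta/\eps))$ levels---does not match this and is not independently justified; the $\log$ factor comes from the number of scales in the charging argument, not from a per-level bound.
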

%-----------------------------------------------------------------------

%-----------------------------------------------------------------------
\begin{proof}
We distinguish between two transitions within the DAG structure. When the distance parameter $r_i$ of a node $x$ first falls below $\LFS(x)$, we say that this is a basic leaf, and when it falls below $\frac{\eps}{3} \LFS(x)$ (the actual termination condition), we say it is a final leaf. Lemma~\ref{charging.lem} (presented later in Section~\ref{sec:storage}) states that the number of capsules at the basic leaf level that are charged to any pair of segments is $O(\log\frac{\Delta}{\eps})$. Therefore, the total number basic leaves is $O(n^2 \log\frac{\Delta}{\eps})$. Observe that all of the points lying within $E'(x,\LFS(x))$ share the same local-feature size values up to constant factors, and therefore, all the descendants of this node lie within the next $O(\log \inv{\eps})$ levels of the structure. They are all similarly shaped (up to constant factors), but their sizes are smaller by a factor of at least $\frac{\eps}{3}$. By the disjointness of the shrunken $E''$ capsule ellipsoids, it follows that the number of descendants is $O(1/\eps^d)$. Therefore, the total number of nodes in the DAG is $O((n^2/\eps^d) \log\frac{\Delta}{\eps})$.
\end{proof}
%-----------------------------------------------------------------------

By combining the results of the previous lemmas, we obtain Theorem~\ref{ann-main.thm}.

%=======================================================================
\section{Storage Bounds for Capsules} \label{sec:storage}
%=======================================================================

Recall the definition of capsules from Section~\ref{sec:capsules}. The infinite cylindrical component in the definition of $\capsule(x, r)$ induced by a particular segment $s_i \in S$ will be denoted $\Cyl_i(x, r)$ and its radius will be denoted by $t_i(x)$.

For a given point $x \in \RE^d$ and $r > 0$, we distinguish two such cylinders. Without loss of generality, let $s_1$ denote the nearest neighbor of $x$ in $S$. Letting $v_1$ be a unit vector parallel to $s_1$ (the direction does not matter by central symmetry). Denote by $\ell_x$ the line passing through $x$ in the direction of $v_1$. Using $\ell_x$, we define the set of points $p_i$ as the intersection of $\bd \Cyl_i(x, r)$ and $\ell_x$ such that $\langle v_1, p_i - x \rangle > 0$. Again, without loss of generality, let $p_2$ denote the closest intersection point to $x$, such that $\Cyl_2(x, r)$ is the cylinder generating $p_2$. We use the two cylinders $\Cyl_1(x, r)$ and $\Cyl_2(x, r)$ to sandwich the capsule $\capsule(x, r)$ between two simple shapes providing lower and upper bounds on its volume. The radii of $\Cyl_1(x, r)$ and $\Cyl_2(x, r)$ will be denoted by $t_1$ and $t_2$, respectively. By definition, $t_1=r$. When $r$ is chosen as the local feature size $\phi(x)$ at $x$, we also have $t_2\geq r$.

The inner bounding volume $V^-(x, r)$ is defined as the double cone whose axis is $\ell_x$ and base is the $(d-1)$-dimensional disk of radius $r$ centered at $x$ orthogonal to $\ell_x$, with two apexes at $p_2$ and (symmetrically about $x$) $2x - p_2$. The outer bounding volume $V^+(x, r)$ is defined as the cylinder with $\ell_x$ as axis whose radius is $r$ and height is equal to the length of the projection of $\Cyl_1(x, r) \cap \Cyl_2(x, r)$ onto $\ell_x$.

%-----------------------------------------------------------------------
\begin{lemma}
Fix a point $x \in \RE^d$ and let $C(x, r)$ be the capsule of radius $r$ induced at $x$ by a set $S$ of $n$ line segments. Then,
\[
    V^-(x, r) 
        ~ \subseteq ~ \capsule(x, r) 
        ~ \subseteq ~ V^+(x, r), 
        \: \text{ and } \:\: 
        \frac{\vol(V^+(x, r))}{\vol(V^-(x, r))} 
        ~ \leq ~ 2 (d+1).
\]
\end{lemma}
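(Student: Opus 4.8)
The plan is to reduce both inclusions and the volume estimate to a one‑dimensional comparison along the axis $\ell_x$. Set $h_2=\dist(x,p_2)$, so that $V^-(x,r)$ extends along $\ell_x$ over the interval $[x-h_2 v_1,\,x+h_2 v_1]$, and let $h^+$ be the height of $V^+(x,r)$; the extent of $V^+(x,r)$ along $\ell_x$ is also an interval centered at $x$ (of length $h^+$), since each $\Cyl_i(x,r)$ has its axis through $x$ and is centrally symmetric about it, hence so is $\Cyl_1(x,r)\cap\Cyl_2(x,r)$. I would first dispatch the \emph{upper} inclusion: discarding from $\capsule(x,r)=\bigcap_{s\in S}\capsule_s(x,r)$ every constraint except the two distinguished cylinders only enlarges the set, so $\capsule(x,r)\subseteq\Cyl_1(x,r)\cap\Cyl_2(x,r)$; since $\Cyl_1(x,r)$ is exactly the set of points within distance $t_1=r$ of $\ell_x$, every point of $\Cyl_1\cap\Cyl_2$ lies within distance $r$ of $\ell_x$ and projects onto $\ell_x$ inside that length‑$h^+$ interval, which is precisely the description of $V^+(x,r)$.

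For the \emph{lower} inclusion I would invoke convexity. Each $\capsule_s(x,r)$, hence $\capsule(x,r)$, is convex, and $V^-(x,r)=\conv\big(D\cup\{p_2,2x-p_2\}\big)$, where $D$ is the radius‑$r$ disk at $x$ orthogonal to $\ell_x$; so it suffices to show $D$ and the two apexes lie in $\capsule(x,r)$. In the capsule construction every cylinder radius and every bounding‑ball radius has the form $\max(r,\cdot)\ge r$, so $B(x,r)\subseteq\capsule_s(x,r)$ for every $s$, and thus $D\subseteq B(x,r)\subseteq\capsule(x,r)$. For the apexes, $p_2$ is the first crossing of $\ell_x$ with $\bd\,\capsule(x,r)$ in the positive $v_1$ direction, so $\overline{x\,p_2}\subseteq\capsule(x,r)$, and by central symmetry $\overline{x\,(2x-p_2)}\subseteq\capsule(x,r)$; taking the convex hull yields $V^-(x,r)\subseteq\capsule(x,r)$.

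For the volume ratio, $V^-(x,r)$ is a double cone with base measure $\kappa_{d-1}r^{d-1}$ ($\kappa_{d-1}$ the volume of the unit $(d-1)$‑ball) and apex heights $h_2$, so $\vol(V^-(x,r))=\tfrac{2}{d}\kappa_{d-1}r^{d-1}h_2$, whereas $V^+(x,r)$ is a cylinder of radius $r$ and height $h^+$, so $\vol(V^+(x,r))=\kappa_{d-1}r^{d-1}h^+$; hence
\[
    \frac{\vol(V^+(x,r))}{\vol(V^-(x,r))} ~=~ \frac{d}{2}\cdot\frac{h^+}{h_2},
\]
and the lemma reduces to the estimate $h^+\le 4h_2$. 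To prove this, let $\theta$ be the angle between the axes of $\Cyl_1(x,r)$ and $\Cyl_2(x,r)$, both of which pass through $x$; then $p_2\in\bd\,\Cyl_2(x,r)\cap\ell_x$ gives $h_2=t_2/\sin\theta$. For any $y$ within distance $r$ of $\ell_x$, let $a$ be its signed $\ell_x$‑coordinate relative to $x$; the distance from $y$ to the axis of $\Cyl_2(x,r)$ is at least $|a|\sin\theta-r$, and if $y\in\Cyl_2(x,r)$ this is at most $t_2$, so $|a|\le(t_2+r)/\sin\theta\le 2t_2/\sin\theta=2h_2$ using $r=t_1\le t_2$. Therefore the projection of $\Cyl_1\cap\Cyl_2$ onto $\ell_x$ has length at most $4h_2$, i.e.\ $h^+\le 4h_2$, and the ratio is at most $2d\le 2(d+1)$.

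The delicate point --- and where the hypothesis that $r$ is comparable to the local feature size is used (guaranteeing $t_1=r$ and $t_2\ge r$) --- is the lower inclusion: one must ensure that $\ell_x$ leaves $\capsule(x,r)$ through a \emph{cylindrical} facet, so that ``the cylinder generating $p_2$'' is well defined, and that the apex $p_2$ is not cut off by a spherical bounding ball of a pseudo‑cylinder capsule or by a pure ball capsule --- in other words, that $h_2$ genuinely equals the exit parameter of $\ell_x$ from $\capsule(x,r)$. When the nearest crossing is through a ball, or when no segment induces a transverse cylinder, one instead takes $V^-=V^+=B(x,r)$ (or the obvious one‑sided variants) and checks the ratio directly; this is where a small amount of slack in the constant $2(d+1)$ is handy.
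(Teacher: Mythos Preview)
Your argument follows the same line as the paper's: both hinge on the angle $\theta$ between the axes of the two distinguished cylinders and reduce the ratio to an explicit comparison of a cylinder height against a cone height. The paper is terser---it simply writes down the two volumes with the common $\csc\theta$ factor and cancels---whereas you spell out the convex-hull argument for $V^-\subseteq\capsule(x,r)$, the projection argument for $\capsule(x,r)\subseteq V^+$, and the triangle-inequality estimate $h^+\le 4h_2$; your constant $2d$ is in fact slightly sharper than the paper's $2(d+1)$.

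One point to tighten: in the lower-inclusion step you write that ``$p_2$ is the first crossing of $\ell_x$ with $\bd\,\capsule(x,r)$,'' but by the paper's definition $p_2$ is only the nearest crossing of $\ell_x$ with some $\bd\,\Cyl_i(x,r)$, and a ball constraint in one of the $\capsule_s(x,r)$ could in principle cut $\ell_x$ sooner. You correctly identify exactly this issue in your final paragraph and sketch how to handle the degenerate cases, so the overall argument is sound; just reword the body so that the claim there matches the actual definition of $p_2$ rather than the stronger statement you later qualify. The paper's own proof glosses over this point entirely.
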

%-----------------------------------------------------------------------

%-----------------------------------------------------------------------
\begin{proof}
The containment follows by the construction of the bounding volumes. In bounding the ratio of the two volumes, we use the same notation and assumptions as above, without loss of generality. Letting $\theta$ denote the acute angle between the two lines supporting $s_1$ and $s_2$, and $\mathbb{V}_{d-1}$ denote the volume of a unit ball in $\RE^{d-1}$, we have
\[
    \frac{\vol(V^+(x, r))}{\vol(V^-(x, r))}
        ~ \leq ~ \dfrac{ 4 \SP \mathbb{V}_{d-1} r^{d-1} t_1(x) \cdot \csc(\theta)}{\frac{2}{d+1} \mathbb{V}_{d-1} r^{d-1} t_1(x) \cdot \csc(\theta)} 
        ~ \leq ~ 2\cdot(d+1),  
\]
where the numerator is the volume of the intersection of two cylinders, and the denominator is the volume of a cone in $\RE^d$.
\end{proof}

In order to bound the number of leaf-level capsules within a ball of radius $O\left(\frac{1}{\eps}\cdot\diam(S)\right)$, we use the following charging scheme. Again, we use the simplified notation and assumptions from before. A capsule $\capsule(x, r)$ will be charged to the two line segments $s_1$ and $s_2$. For a fixed pair of segments $s_i$ and $s_j$, acting as respectively as $s_1$ and $s_2$ for the point $x$ in consideration, we may restrict attention to all center points $x$ lying in the cylinder of radius $r$ with the line supporting $s_i$ as axis; denote this cylinder by $\Cyl(s_i, r)$.

%-----------------------------------------------------------------------
\begin{lemma} \label{charging.lem}
The number of leaf level capsules charged to any pair of segments is $O(\log \frac{\Delta}{\eps})$.
\end{lemma}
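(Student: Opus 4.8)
The plan is to fix a pair of segments $s_i, s_j$ playing the roles of $s_1$ (nearest) and $s_2$ (second-nearest) respectively, and to count how many basic-leaf capsule centers $x$ can be charged to this pair. By the discussion preceding the lemma, every such center $x$ lies in the cylinder $\Cyl(s_i, r)$ of radius $r$ (its own local feature size) about the line supporting $s_i$; moreover, since the capsule is a basic leaf, $r = \LFS(x) = \dist(x, s_j)$, and $\dist(x, s_i) \le r$. Here $r$ is not a single fixed value — different leaves live at different levels $i$ of the DAG, so $r$ ranges over the geometric sequence $r^+/2^i$. The key is therefore to show two things: (a) for a \emph{fixed} scale $r$, only $O(1)$ leaf centers charged to the pair can exist, and (b) the range of scales $r$ that can possibly arise spans only $O(\log(\Delta/\eps))$ levels. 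Multiplying these gives the bound.

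For part (b), I would argue as follows. Since $x \in B^+(S)$ and the leaf condition forces $r = \LFS(x)$, Lemma~\ref{Lipschitz.lem} (1-Lipschitz continuity of $\LFS$) together with the fact that $\LFS(x) \ge \delta_{\min}/2$ for all $x \in B^+(S)$ bounds $r$ from below by $\Omega(\delta_{\min})$. On the other hand, $\LFS(x) = \dist(x, s_j)$ and $\dist(x, s_i) \le r$, so $x$ lies within distance $r$ of \emph{both} segments of the pair; since the two segments are disjoint and both have bounded diameter, any point within distance $r$ of both forces $r = \Omega(\dist(s_i, s_j)) = \Omega(\delta_{\min})$ as well, while trivially $r = O(r^+) = O(\diam(S)/\eps)$. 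Hence $r$ ranges over an interval of multiplicative width $O((\diam(S)/\eps)/\delta_{\min}) = O(\Delta/\eps)$, and since consecutive DAG levels differ by a factor of $2$ in scale, at most $O(\log(\Delta/\eps))$ distinct levels can contribute leaves charged to this pair.

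For part (a), fix a scale $r$ and consider two leaf centers $x, x'$ at that scale, both charged to $(s_i, s_j)$. At the basic-leaf level the associated $E''$ ellipsoids are pairwise disjoint (this is how $X_i$ was chosen), and each $E''(x, r)$ contains a ball of radius $\Omega(r)$ about $x$ (by John's theorem applied to $\capsule^{\lambda''}(x,r)$, which itself contains the ball of radius $\lambda'' r$). So it suffices to bound the number of points at pairwise distance $\Omega(r)$ that can simultaneously lie within distance $r$ of $s_i$ (i.e.\ inside $\Cyl(s_i,r)$) \emph{and} within distance $r$ of $s_j$. The intersection $\Cyl(s_i, r) \cap (\text{$r$-neighborhood of } s_j)$ has diameter $O(r)$: it is contained in $\Cyl(s_i,r)$, whose cross-section has radius $r$, and its extent along the axis of $s_i$ is $O(r)$ because $s_j$ is disjoint from the axis and staying within distance $r$ of $s_j$ while within distance $r$ of the axis of $s_i$ confines the axial coordinate to a window of length $O(r/\sin\theta)$... and here is the subtlety: $\theta$, the angle between the two supporting lines, may be tiny. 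When $\theta$ is small the two cylinders' intersection is long and thin, so naively the axial window is $\Theta(r/\sin\theta)$, which is unbounded. The main obstacle is to rule this out. The resolution I expect is that for such a leaf center $x$ we also have $r = \dist(x, s_j)$ \emph{exactly} (not just $\le r$), i.e.\ $x$ lies on the boundary of the $r$-tube around $s_j$, not merely inside it; combined with $\dist(x,s_i) \le r$ this pins $x$ down to within $O(r)$ of the closest point between the two segments, irrespective of $\theta$. Equivalently, one invokes Lemma~\ref{LFS-passport.lem}/expansion-containment: if two charged leaf centers at the same scale were far apart, their capsules would be disjoint yet both sandwiched (via the volume bounds $V^- \subseteq \capsule \subseteq V^+$ of the preceding lemma, which are tight up to the $2(d+1)$ factor independent of $\theta$) into a common region of volume $O(r^d)$, a packing contradiction once more than a constant number are present. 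This gives the $O(1)$ bound for part (a), and combining with part (b) completes the proof. $\qedsymbol$
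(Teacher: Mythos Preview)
Your overall architecture (part (b): $O(\log(\Delta/\eps))$ scales; part (a): $O(1)$ charged leaves per scale) matches the paper, and you correctly isolate the $\csc\theta$ blow-up as the real obstacle. But your resolution of that obstacle in part (a) does not go through.

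First, the attempt to pin $x$ to within $O(r)$ of the closest point of the pair by asserting ``$r = \dist(x,s_j)$ exactly'' fails on both counts: the basic-leaf condition only gives $r_i \le \LFS(x)$ (you can argue $r_i = \Theta(\LFS(x))$ via Lemma~\ref{LFS-passport.lem}, but not equality), and even granting equality, when $s_i,s_j$ are nearly parallel the locus $\{\dist(x,s_j)=r\}\cap\{\dist(x,s_i)\le r\}$ is still a long sliver of axial extent $\Theta(r\csc\theta)$, not $O(r)$. Second, the fallback packing claim---that the disjoint $E''$-ellipsoids are ``sandwiched into a common region of volume $O(r^d)$''---is precisely what is false when $\theta$ is small: the region of admissible centers has volume $\Theta(r^d\csc\theta)$, not $O(r^d)$.

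The paper's fix is the observation you are circling but never state: the $\csc\theta$ factor appears \emph{symmetrically}. The set of admissible centers (the portion of $\Cyl(s_i,r)$ at distance $\Theta(r)$ from $s_j$) has volume $O(r^d\csc\theta)$, while each shrunken capsule, via the inner cone $V^-(x,r)$ whose apex $p_2$ sits at distance $\Theta(r\csc\theta)$ along $\ell_x$, has volume $\Omega(r^d\csc\theta)$ as well. The ratio is therefore $O(1)$ independently of $\theta$, and disjointness of the $E''$-ellipsoids finishes the packing bound. Your citation of the $2(d+1)$ volume ratio between $V^+$ and $V^-$ is the right lemma to invoke, but the conclusion to draw from it is that \emph{both} sides carry the same $\csc\theta$, not that either side is $O(r^d)$.
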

%-----------------------------------------------------------------------

%-----------------------------------------------------------------------
\begin{proof}
We cover the points in $\Cyl(s_i, r)$ using a sequence of growing cylindrical intervals based on their distances to $s_j$. Define $\mathcal{I}_k(s_i, r)$ to be the set of points in $\Cyl(s_i, r)$ such that for all $x \in \mathcal{I}_k(s_i, r)$ we have $2^{k-1} r \leq \dist(x, s_j) \leq 2^k r$. In other words, $\mathcal{I}_k(s_i, r)$ is the intersection of $\Cyl(s_i, r)$ with the \emph{cylindrical shell} $\Cyl(s_j, 2^k r)\setminus\Cyl(s_j, 2^{k-1} r) $. It is easy to see that the volume of the intersection is maximized when $s_i$ intersects $s_j$ due to the symmetry of the cylinder $\Cyl(s_i, r)$ about the line supporting $s_i$. Similar to the upper bound on the volume of $\capsule(x, r)$ by that of $V^+(x, r)$, we see that $\vol(\mathcal{I}_k(s_i, r))$ is at most $4 \SP \mathbb{V}_{d-1} 2^k r^d \cdot \csc(\theta)$, where $\theta$ is the acute angle between the two lines supporting $s_i$ and $s_j$. For any capsule charged to $s_i$ and $s_j$ with center $x \in \mathcal{I}_k(s_i, r)$, we use the inner volume $V^-(x, r) \subseteq \capsule(x, r)$ to obtain a lower bound $\vol(C^\lambda(x, r)) \geq \frac{1}{d+1} \mathbb{V}_{d-1} 2^k r^d \lambda^d \cdot \csc(\theta)$. By choosing the capsule centers to have global packing-covering properties as a Delone set, it follows that there can be at most $4(d+1)/\lambda^d=O(1)$ capsules centered within $\mathcal{I}_k(s_i, r)$.

The desired bound follows by repeating the above argument over all scales $r$, a total of $\log(\Delta)$, and all $k$ with $2^k  r \leq \frac{1}{\eps}\cdot\diam(S)$, a total of $\log(1/\eps)$.
\end{proof}
%-----------------------------------------------------------------------

%=======================================================================
\section{Conclusions and Future Work}
%=======================================================================

We have presented a new AVD-based approach to answering $\eps$-segment ANN queries based on a hierarchical covering of space by ellipsoids. By elaborating on the intrinsic geometry underlying more general distance functions, our work helps pave the way to extend well-established techniques from data structure design and approximation algorithms in Euclidean and metric spaces to more general geometries. Specifically, we anticipate further progress in understanding the metric-like structure defined by the local tensors derived from the Hessians of distance functions. This should directly benefit the development of more efficient space covers, e.g., circumventing dependence on the spread and other geometric parameters. In addition, we expect polytope approximation techniques to enable better $\eps$-dependencies in the storage requirements. Put together, we leave it to future work to achieve both remaining tasks of eliminating dependence on the spread and improving low-level query processing to obtain $O(\log n/\eps)$ query times with only $O(n^2 / \eps^{d/2})$ storage, by analogy with the best known corresponding results for ANN against point sets under the Euclidean metric modulo the quadratic dependence on $n$ needed for storage in the case of segment-ANN.

%=======================================================================
% Bibliography
%=======================================================================

\bibliography{shortcuts,convex,segments}

\appendix

%=======================================================================
\section{Lower Bound} \label{lower-bound.sec}
%=======================================================================

In earlier work, Har-Peled \cite{Har01} remarked that a griddle-like construction implies a quadratic lower bound on the space of any AVD data structure for answering ANN queries among line segments; see also the related result by Aronov~\cite{Aro02}. For the sake of completeness, we present a proof that generalizes to our context of anisotropic elements. We assume that we are given a set $S$ of disjoint segments in $\RE^d$ and an error parameter $\eps > 0$. We make the assumption that the data structure operates in the \emph{AVD model}, which for us means that it consists of a cover of space by convex elements such that each element contains a representative segment, which is an $\eps$-ANN for any query point lying within this element. The storage required is the number of elements in the cover. While a usable AVD should also provide an efficient index for locating a member of the cover that contains any query point, our lower bound proof does not assume the existence of such an index.

%-----------------------------------------------------------------------
\begin{theorem}\label{lower-bound.thm}
Given any $\eps > 0$, there exists a $S$ set of $n$ line segments in $\RE^3$ such that any data structure in the AVD model for answering $\eps$-approximate nearest neighbor queries for $S$ requires $\Omega(n^2)$ space.
\end{theorem}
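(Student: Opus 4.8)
The plan is to construct a "griddle" configuration: one family of $n/2$ roughly parallel segments arranged in a grid pattern in a horizontal plane, and a second family of $n/2$ segments crossing transversally at a different height, so that the pairwise "near-miss" events between the two families force the AVD cover to distinguish $\Omega(n^2)$ regions. Concretely, I would place the first family $S_1 = \{a_1,\dots,a_{n/2}\}$ as horizontal segments parallel to the $x$-axis, stacked along the $y$-axis at $z=0$, and the second family $S_2 = \{b_1,\dots,b_{n/2}\}$ as horizontal segments parallel to the $y$-axis, stacked along the $x$-axis at $z=h$ for a small height $h$. For each pair $(a_i, b_j)$ there is a point $q_{ij}$ lying (roughly) below the crossing of the vertical projections of $a_i$ and $b_j$, at a height chosen so that $a_i$ and $b_j$ are the two nearest segments to $q_{ij}$ and are at nearly equal distance, while every other segment is strictly farther away by more than a $(1+\eps)$ factor.

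The key steps, in order: First, fix the geometric scales — segment lengths $\Theta(1)$, inter-segment spacing within each family $\Theta(\gamma)$ for a parameter $\gamma$ depending on $\eps$, and vertical gap $h = \Theta(\gamma)$ — chosen so that at each witness point $q_{ij}$ the distances to $a_i$ and to $b_j$ are both $\Theta(h)$ and agree to within a factor far smaller than $1+\eps$, whereas the distance to any $a_{i'}$ with $i'\neq i$ (respectively any $b_{j'}$ with $j'\neq j$) exceeds $\dist(q_{ij}, a_i)$ by more than a factor $1+\eps$. Second, observe that a valid AVD element containing $q_{ij}$ must store as its representative a segment that is a $(1+\eps)$-ANN of $q_{ij}$; by the distance calculation this representative must be either $a_i$ or $b_j$. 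Third — the crux — show that no single convex element $C$ of the cover can contain two distinct witness points $q_{ij}$ and $q_{i'j'}$ with $(i,j)\neq(i',j')$: if it did, $C$ would need a representative that simultaneously $(1+\eps)$-approximates the nearest-neighbor distance at both points, but the only candidates are $\{a_i,b_j\}$ for the first and $\{a_{i'},b_{j'}\}$ for the second, and when $(i,j)\neq(i',j')$ these sets are disjoint (or, in the degenerate overlap case $i=i'$ or $j=j'$, one uses convexity: the segment $\overline{q_{ij}\,q_{i'j'}}$ passes through a region where the shared candidate is no longer a valid approximate neighbor). Hence the $\Theta(n^2)$ witness points lie in pairwise-distinct elements, giving the $\Omega(n^2)$ space lower bound.

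The main obstacle is step three, specifically ruling out the "diagonal" overlap cases where two witness points share one of their two near segments (say $q_{ij}$ and $q_{ij'}$, both near $a_i$): here the naive disjointness argument fails, and one must instead exploit convexity. The idea is that the line segment connecting $q_{ij}$ to $q_{ij'}$ — both at the same low height, but displaced in $y$ — passes over or near one of the intermediate $b$-segments or moves away from $a_i$, so that at some interior point of the segment neither $a_i$ nor any fixed single segment is a $(1+\eps)$-ANN; choosing the spacing $\gamma$ and height $h$ appropriately as functions of $\eps$ makes this quantitative. I would carry out this computation carefully, since it is what pins down the relationship between the construction parameters and $\eps$, and it is also where the generalization beyond axis-aligned quadtree boxes to arbitrary convex elements actually matters — the argument uses nothing about the element's shape except convexity and the single-representative property of the AVD model.
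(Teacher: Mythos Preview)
Your griddle construction is the right picture, and it matches the paper's setup. The gap is in step three, precisely in the ``same-row'' case you flagged as the main obstacle---and your proposed fix does not actually work with the witness placement you chose.

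You place each witness $q_{ij}$ so that $a_i$ and $b_j$ are at \emph{nearly equal} distance from it. Now take $q_{ij}$ and $q_{ij'}$ (same $a_i$). The straight segment between them runs parallel to $a_i$ at constant height, so the distance to $a_i$ is \emph{constant} along the entire segment. At every point of this segment $a_i$ is either the nearest segment outright (between crossings, where the nearest $b$ is at distance $\Theta(\gamma) \gg \Theta(h)$) or tied with some $b_{j''}$ at a crossing (distance $\Theta(h)$ to each). In both situations $a_i$ is a perfectly valid $(1+\eps)$-ANN. So a single convex element with representative $a_i$ can legally cover all of $q_{i1},\dots,q_{i,n/2}$, and your argument collapses to an $\Omega(n)$ bound.

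The paper's construction avoids this by breaking the symmetry: the witness points are placed at the \emph{same height as the horizontal family}, directly above a vertical segment $v_i$, but at the midpoint between two consecutive horizontals. With $\delta < 1/(2(1+\eps))$ this makes $v_i$ the \emph{unique} valid representative at each witness (distance $\delta$ to $v_i$ versus $1/2$ to the nearest horizontal). Crucially, the collinear sequence of witness points along $v_i$ is interleaved with points lying \emph{on} horizontal segments, where the distance to a horizontal is zero and $v_i$ is disqualified for any $\eps$. Convexity then forces a separate element for each witness. The idea you were reaching for---a blocking point on the chord where the candidate representative fails---is exactly right, but it only materializes once the witnesses are placed asymmetrically so that a single segment is forced as representative and the interleaved points actually kill it.
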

%-----------------------------------------------------------------------

%-----------------------------------------------------------------------
\begin{proof}
Given $\eps$, define $\delta$ to be any positive real value that is smaller than $1/2(1+\eps)$. Our construction will actually involve $2(n+1)$ line segments. The segment set $S$ consists of two sets $V$ and $H$ of segments in $\RE^3$, each of size $n+1$, that are laid out in a griddle-like manner parallel to the $(x,y)$-plane, with the segments of $H$ lying slightly above those of $V$. The ``vertical'' segments of $V = \{v_0, \ldots, v_n\}$ are parallel to the $y$-axis, where $v_i$ joins the points $(i,0,0)$ with $(i,n,0)$. The ``horizontal'' segments of $H = \{h_0, \ldots, h_n\}$ are parallel to the $x$-axis, where $h_j$ joins the points $(0,j,\delta)$ with $(n,j,\delta)$. Let the query domain $\mathcal{Q}$ be convex hull of $S$, that is, the 3-dimensional rectangle $[0,n] \times [0,n] \times [0,\delta]$.

%-----------------------------------------------------------------------
\begin{figure}[htbp]
  \centerline{\includegraphics[scale=0.40]{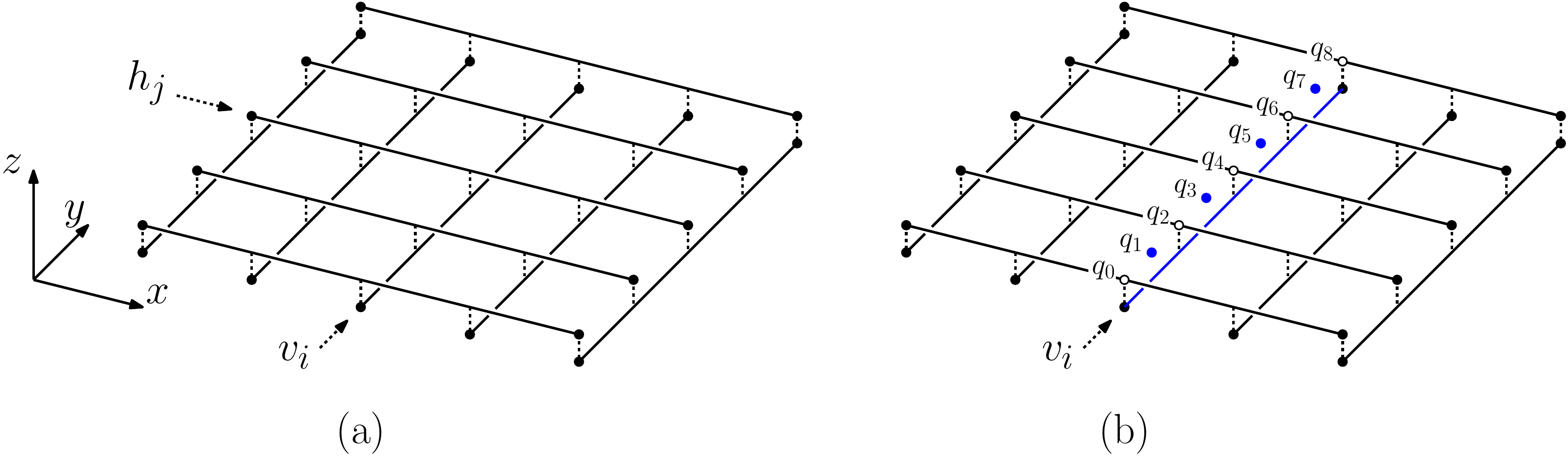}}
  \caption{Lower-bound construction.}
  \label{lower-bound.fig}
\end{figure}
%-----------------------------------------------------------------------

Intuitively, the Voronoi cell of each horizontal segment must pinch down to a small size as it approaches any point of the integer where a vertical segment crosses, but then it expands to a much large size between these grid point. This small-large oscillation along any segment implies that no convex region can be useful for approximating a segment's Voronoi cell, except along a single segment of the $n \times n$ grid. The quadratic lower bound follows as a direct consequence.

To make this intuition more formal, fix any vertical segment $v_i$ and consider a sequence of $2 n + 1$ query points placed uniformly along a parallel segment lying at distance $\delta$ above $v_i$, $\{q_0,\ldots,q_{2 n}\}$, where $q_j = \big( i, \frac{j}{2}, \delta \big)$. Observe that alternate points of this sequence $\{q_0, q_2, \ldots, q_{2 n}\}$ lie on segments of $H$, and thus these points cannot have $v_i$ has their $\eps$-nearest neighbor for any $\eps$. We assert that the remaining points $\{q_1, q_3, \ldots, q_{2 n - 1}\}$ must use $v_i$ as their representative. Each is at distance $\delta$ from $v_i$ and at distance $\inv{2}$ from the next closest segment (the horizontal segments to either side). By our choice of $\delta$, the relative error that would be committed by reporting any segment other than $v_i$ for these query points would be at least
\[
    \frac{(1/2) - \delta}{\delta}
        ~ = ~ \inv{2 \delta} - 1
        ~ > ~ (1 + \eps) - 1
        ~ = ~ \eps,
\]
which implies that $v_i$ is the only legitimate representative for these query points. Because these query points are collinear, any data structure in the AVD model must use distinct convex regions to cover the points $n$ points $\{q_1, q_3, \ldots, q_{2 n - 1}\}$, each of which has $v_i$ as its representative point. Repeating this for each of the vertical segments, it follows that the AVD needs at least $n^2$ regions. Summarizing, we have the following.
\end{proof}
%-----------------------------------------------------------------------

While the assumption that the regions in the AVD model are convex is satisfied by all existing AVD data structures for nearest neighbor searching, it is natural to wonder whether reasonable relaxations of this assumption might lead to significantly better bounds. Note that the cells of an (exact) Voronoi diagram of line segments are bounded by algebraic surfaces of constant degree. The alternating nature of the query points from our lower-bound construction implies that the quadratic lower bound would still hold even if we were to generalize the AVD model to allow (possibly non-convex) cells of constant combinatorial complexity that are bounded by algebraic surface patches of constant degree.

\end{document}